\tikzset{
    treevertex/.style = {align=center, scale=0.7, inner sep=1pt, text centered, circle, draw, minimum size=6mm, font={\small}},
  arn_x/.style = {treevertex, rectangle, draw=black,
  minimum width=0.5em, minimum height=0.5em}%
}
\newcommand{\N}{\mathbb{N}}
\newcommand{\R}{\mathbb{R}}
\newcommand{\M}{\mathcal{M}}
\newcommand{\D}{\mathcal{D}}
\newcommand{\T}{\mathcal{T}}
\newcommand{\nSigma}{\widehat{\Sigma}}
\newcommand{\norm}[1]{\left\lVert#1\right\rVert}
\newcommand{\lap}{\mathsf{Lap}}
\newcommand{\puredp}{$\epsilon$-DP}
\newcommand{\approxdp}{$(\epsilon,\delta)$-DP}
\DeclarePairedDelimiter\ceil{\lceil}{\rceil}
\DeclarePairedDelimiter\floor{\lfloor}{\rfloor}
\DeclarePairedDelimiter\abs{\lvert}{\rvert}
\DeclareMathOperator{\sgn}{sgn}
\DeclareMathOperator{\enc}{\mathsf{rep}}
\DeclareMathOperator{\nenc}{\mathsf{\widehat{rep}}}
\DeclareMathOperator{\eenc}{\mathsf{\widetilde{rep}}}
\DeclareMathOperator{\var}{Var}
\DeclareMathOperator{\mse}{MSE}
\DeclareMathOperator{\expectation}{\mathbb{E}}
\DeclareMathOperator{\tvd}{TV}
\newcommand{\Alg}{\mathrm{Alg}}
\renewcommand\vec{\mathbf}
\renewcommand\epsilon{\varepsilon}
\title{Count on Your Elders: Laplace vs Gaussian Noise}
\author{Joel Daniel Andersson}{BARC, University of Copenhagen}{jda@di.ku.dk}{https://orcid.org/0000-0003-2530-0520}{}
\author{Rasmus Pagh}{BARC, University of Copenhagen}{pagh@di.ku.dk}{https://orcid.org/0000-0002-1516-9306}{}
\author{Teresa Anna Steiner}{University of Southern Denmark}{steiner@imada.sdu.dk}{https://orcid.org/0000-0003-1078-4075}{}
\author{Sahel Torkamani}{University of Edinburgh}{s.torkamani@sms.ed.ac.uk}{}{}
\keywords{differential privacy, continual observation, streaming, prefix sums, trees}
\authorrunning{J. D. Andersson, R. Pagh, T. A. Steiner, and S. Torkamani} %
\begin{document}
\maketitle

\begin{abstract}
In recent years, Gaussian noise has become a popular tool in differentially private algorithms, often replacing Laplace noise which dominated the early literature on differential privacy.
Gaussian noise is the standard approach to \emph{approximate} differential privacy, often resulting in much higher utility than traditional (pure) differential privacy mechanisms.
In this paper we argue that Laplace noise may in fact be preferable to Gaussian noise in many settings, in particular when we seek to achieve $(\varepsilon,\delta)$-differential privacy for small values of $\delta$.
We consider two scenarios:

First, we consider the problem of counting under continual observation and present a new generalization of the binary tree mechanism that uses a $k$-ary number system with \emph{negative digits} to improve the privacy-accuracy trade-off.
Our mechanism uses Laplace noise and whenever $\delta$ is sufficiently small it improves the mean squared error over the best possible
$(\varepsilon,\delta)$-differentially private factorization mechanisms based on Gaussian noise.
Specifically, using $k=19$ we get an asymptotic improvement over the bound given in the work by Henzinger, Upadhyay and Upadhyay (SODA 2023) when $\delta = O(T^{-0.92})$.

Second, we show that the noise added by the Gaussian mechanism can always be replaced by Laplace noise of comparable variance for the same $(\epsilon, \delta)$-differential privacy guarantee, and in fact for sufficiently small $\delta$ the variance of the Laplace noise becomes strictly better.
This challenges the conventional wisdom that Gaussian noise should be used for high-dimensional noise.

Finally, we study whether counting under continual observation may be easier in an average-case sense than in a worst-case sense. We show that, under pure differential privacy, the expected worst-case error for a random input must be $\Omega(\log(T)/\varepsilon)$, matching the known lower bound for worst-case inputs.
\end{abstract}

\section{Introduction}

In private data analysis the goal is to release the result of computations on datasets while safeguarding the privacy of the individuals whose data make up the dataset.
A popular framework for achieving this is \emph{differential privacy}~\cite{dp_2006}, which gives probabilistic guarantees on how much can be learned about an individual's data from the result of a computation.
Algorithms providing differential privacy (DP) guarantees conceptually introduce \emph{noise}, randomly perturbing the result of a corresponding exact, non-private computation.
The trade-off between utility and privacy guarantees depends on the nature of the computation.

Let $\vec{x}, \vec{x}'\in\{0, 1\}^T$ be input datasets.
We say $\vec{x}$ and $\vec{x}'$ are neighboring if they are equal except for a single bit.
We want to compute all prefix sums of $\vec{x}$ with differential privacy under this neighboring relation, i.e., we want to privately release $A\vec{x}$ where $A$ is the lower-triangular all 1s matrix.
If we add the restriction that the bits $\vec{x}_1, \vec{x}_2, \dots, \vec{x}_T$ are received one by one as a stream, and we are to output $(A\vec{x})_i$ on receiving $\vec{x}_i$, then this problem is referred to as \emph{counting under continual observation} or \emph{continual counting} for short \cite{dwork_differential_2010, chan_private_2011}.

The standard solution to this problem is the \emph{factorization mechanism} \cite{li_matrix_2015}, and it involves (1) factorizing $A$ into matrices $L, R$ such that $A = LR$, (2) privately releasing $R\vec{x}$ by adding noise $\vec{z}$ and then (3) multiplying the private release by $L$ from the left: $L(R\vec{x} + \vec{z}) = A\vec{x} + L\vec{z}$.
Defining the $\ell_p$-sensitivity as $\Delta_p = \max_{i\in[T]}\norm{R\vec{e}_i}_p$, where $\vec{e}_i$ is the $i$\textsuperscript{th} unit basis vector,
classical results on DP say that adding Laplace noise scaled to $\Delta_1$ yields $\epsilon$-DP (also called \emph{pure} DP), and adding Gaussian noise scaled to $\Delta_2$ yields $(\epsilon, \delta)$-DP (also called \emph{approximate} DP).
The scheme rests on the observation that while the sensitivity of $A$ may be large, the sensitivity of $R$ can be much smaller, and so adding noise $L\vec{z}$ to the final result may result in better accuracy than adding noise directly to $A\vec{x}$.

While continual counting is a deceptively simple problem, we note that it still has significant gaps between upper and lower bounds \cite{dwork_differential_2010, henzinger_almost_2023, fichtenberger_constant_2023, dwork_rectangle_queries_2015, cohen2024lower}. The problem is used as a subroutine in various applications ranging from private learning \cite{mcmahan_federated, kairouz_practical_2021, choquette-choo_multi-epoch_2022, denissov_improved_2022, choquettechoo2023amplified} to histogram estimation \cite{cardoso_differentially_2022, chan_differentially_2012, huang_frequency_2022, upadhyay_sublinear_2019}, so any improvement in upper bounds affect these applications.

\subsection{Our contributions}\label{sec:our_contributions}
Our first contribution is an easy-to-implement tree-aggregation-based mechanism for continual counting under $\epsilon$-DP, and based on Laplace noise, with properties stated in \cref{thm:main}.
\begin{theorem}\label{thm:main}
    Given a constant odd integer $k\geq 3$ and integer $T\geq 2$, there exists an $\epsilon$-DP algorithm for continual counting that for a stream $\vec{x}_1, \vec{x}_2, \dots, \vec{x}_T$ achieves mean squared error $\frac{k(1-1/k^2)}{2\epsilon^2\log(k)^3}\cdot\log(T)^3 + o(\log(T)^3)$, computing all outputs in time $O(T)$ and space $O(\log T)$.
\end{theorem}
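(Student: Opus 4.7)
The plan is to generalize the binary tree mechanism using a $k$-ary tree together with the \emph{balanced} $k$-ary (signed-digit) representation. Set $h := \lceil \log_k T \rceil$ and work with the complete $k$-ary tree over $[k^h]$, in which the node at level $i$ indexed by $j$ stores $S_{i,j} = \sum_{\ell = (j-1)k^i + 1}^{jk^i} x_\ell$ (extending $\vec{x}$ by zeros if $T < k^h$); release $\widetilde{S}_{i,j} := S_{i,j} + Z_{i,j}$ with $Z_{i,j}$ i.i.d.\ $\lap((h+1)/\epsilon)$. To answer $P_t = \sum_{\ell=1}^t x_\ell$, take the unique balanced $k$-ary expansion $t = \sum_{i=0}^{h} d_i k^i$ with $d_i \in \{-(k-1)/2, \dots, (k-1)/2\}$ and $d_h \in \{0,1\}$, and run a top-down signed walk: starting at pointer $p = 0$, for $i = h, h-1, \dots, 0$, advance $p \to p + d_i k^i$ and add (if $d_i > 0$) or subtract (if $d_i < 0$) the $|d_i|$ consecutive noisy level-$i$ block sums lying between the old and new pointer positions.

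I would first verify combinatorially that this reproduces $P_t$ exactly and that it never references a node outside the tree. Correctness is immediate because $\sum_{j \geq i} d_j k^j$ is a multiple of $k^i$ for every $i$, so the blocks traversed at level $i$ align with standard tree nodes and telescope to $P_t$. In-range-ness uses the standard balanced-digit bound $\bigl|\sum_{j < i} d_j k^j\bigr| \leq (k^i - 1)/2$ together with $d_h \in \{0,1\}$, which forces every partial sum $\sum_{j \geq i} d_j k^j$ into $[0, k^h]$. Privacy is then the standard $\ell_1$-sensitivity argument: each input bit belongs to exactly one node per level, so $\Delta_1 = h + 1$, and Laplace noise of scale $(h+1)/\epsilon$ on $\vec{S}$ yields $\epsilon$-DP; every $\widetilde{P}_t$ is post-processing.

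For utility, the error $\widetilde{P}_t - P_t$ is a signed sum of $\sum_{i=0}^{h} |d_i|$ independent Laplace variables, each of variance $2(h+1)^2/\epsilon^2$, so $\expectation[(\widetilde{P}_t - P_t)^2] = \frac{2(h+1)^2}{\epsilon^2} \sum_{i=0}^h |d_i|$. For $t$ uniform in $[1, k^h]$ the map $t \mapsto (d_0, \dots, d_{h-1}, d_h)$ is a bijection that makes the lower $h$ digits exactly uniform on $\{-(k-1)/2, \dots, (k-1)/2\}$, while the top digit contributes only $O(1)$ to $\expectation[\sum_i |d_i|]$. A direct computation gives
\[
    \expectation[|d|] \;=\; \frac{2}{k}\sum_{d=1}^{(k-1)/2} d \;=\; \frac{k^2-1}{4k},
\]
whence the expected squared error averaged over $t$ evaluates to $\frac{2(h+1)^2}{\epsilon^2}\bigl(h \cdot \frac{k^2-1}{4k} + O(1)\bigr) = \frac{k(1 - 1/k^2)}{2 \epsilon^2 \log(k)^3}\log(T)^3 + o(\log(T)^3)$.

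The main obstacle I anticipate is the boundary bookkeeping when $T$ is not a power of $k$: one must argue that padding with zeros preserves privacy (immediate, since fake entries carry no private data) and that the digit distribution over $t \in [1, T]$ is close enough to uniform that the extra $O(1)$ slack in $\expectation[\sum_i |d_i|]$ still absorbs into $o(\log(T)^3)$. Efficiency follows from standard streaming tree aggregation: only the $h+1$ nodes on the current root-to-leaf path need be active at any time, and each node is finalized (its Laplace noise drawn and its contribution to future queries committed) once its block closes, giving $O(1)$ amortized time per stream element and $O(\log T)$ working space.
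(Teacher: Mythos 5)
Your construction and analysis are essentially the paper's: the same balanced (signed-digit) $k$-ary expansion dictates which subtree sums are added or subtracted, privacy follows from one used node per level giving $\ell_1$-sensitivity $\approx h$, and the average digit weight $\frac{k^2-1}{4k}$ per level produces the leading constant $\frac{k(1-1/k^2)}{2\epsilon^2\log(k)^3}$. Your choice of $h=\lceil\log_k T\rceil$ with the root usable (sensitivity $h+1$) versus the paper's $h=\lceil\log_k 2T\rceil$ with the root excluded (sensitivity $h$) only shifts lower-order terms, and your averaging of $\sum_i|d_i|$ over $t\in[1,k^h]$ via an exact bijection is, if anything, a touch cleaner than the paper's $\pm$-symmetry argument over $t\in[1,(k^h-1)/2]$.

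One step does need repair: the online implementation. When $d_i<0$, the query at time $t$ subtracts level-$i$ blocks that extend to the \emph{right} of $t$ and hence contain inputs not yet received (e.g.\ $k=3$, $t=2$: the expansion is $2=1\cdot 3-1$, so $\widetilde P_2=\widetilde S_{[1,3]}-\widetilde S_{\{3\}}$ references $x_3$). Consequently these nodes cannot be \enquote{finalized once the block closes} and then combined at query time --- taken literally, your query procedure is not computable at time $t$, and your amortized-time argument inherits the problem. The fix is the observation (made explicit in the paper by separating the algorithm used for analysis from the one that is executed) that the exact signed block sums telescope to $P_t$, so all not-yet-seen inputs cancel: the online algorithm maintains the exact running prefix sum and adds only the signed combination of the noise variables $Z_v$, drawing each $Z_v$ lazily at its first use and discarding it after its last use (each $Z_v$ is used on a contiguous time interval). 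This leaves your output distribution, privacy and MSE analysis untouched and recovers the claimed $O(T)$ time and $O(\log T)$ space.
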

We denote the base $k$ logarithm by $\log_k$, and define $\log(\cdot) = \log_2(\cdot)$.
To the best of our knowledge, this is the best bound on the mean squared error known for any $\epsilon$-DP mechanism, up to lower-order asymptotically vanishing terms.
It improves the error achieved by applying techniques in \cite{honaker2015} to the binary tree mechanism (as implemented in \cite{kairouz_practical_2021} and referred to elsewhere as \enquote{Honaker Online}) by a factor 4 when $k=19$.
While this is interesting in its own right, we note that this improvement in the leading constant has implications for how the error compares to \approxdp{} mechanisms.
Henzinger, Upadhyay and Upadhyay \cite{henzinger_almost_2023} produce a factorization with a bound on the mean squared error of $C_{\epsilon, \delta}^2\big(1 + \frac{\ln(4T/5)}{\pi}\big)^2$, where $C_{\epsilon, \delta} = \frac{2}{\epsilon}\sqrt{\frac{4}{9} + \ln\big(\frac{1}{\delta}\sqrt{\frac{2}{\pi}}\big)}$, and show that no factorization mechanism based on Gaussian noise can asymptotically improve on this error.\footnote{As the constant $C_{\epsilon, \delta}$ is valid for all $\epsilon, \delta \in (0, 1)$, it seems plausible that it can be reduced via a tighter analysis of the Gaussian mechanism, as done in for example \cite{pmlr-v80-balle18a}.}
We improve on their error using Laplace noise for small enough $\delta$.
\begin{theorem}\label{thm:beats_gaussian_noise}
    Given an integer $T\geq 2$, there exists an $\epsilon$-DP algorithm for continual counting, based on Laplace noise, that improves on the mean squared error bound given in \cite{henzinger_almost_2023}, up to lower-order asymptotically vanishing terms, whenever $\delta = O(T^{-0.92})$.
\end{theorem}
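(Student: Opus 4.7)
The strategy is to instantiate Theorem~\ref{thm:main} with a carefully chosen constant odd integer $k$ and then compare the leading-order terms of our MSE bound against those of \cite{henzinger_almost_2023}. Since the HUU bound uses natural logarithms while Theorem~\ref{thm:main} uses $\log_2$, I would first substitute $\log_2 T = \ln T / \ln 2$ to rewrite our bound as
\[
\frac{k(1 - 1/k^2)}{2\epsilon^2 (\ln 2)^3 (\log_2 k)^3} (\ln T)^3 + o((\ln T)^3).
\]
On the HUU side, expanding $C_{\epsilon,\delta}^2$ gives $\tfrac{4}{\epsilon^2}\bigl(\tfrac{4}{9} + \tfrac{1}{2}\ln(2/\pi) + \ln(1/\delta)\bigr)$, and $(1 + \ln(4T/5)/\pi)^2 = (\ln T)^2/\pi^2 + O(\ln T)$. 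Retaining only the dominant terms as $T\to\infty$ and $\delta\to 0$, the HUU bound is $\tfrac{4\ln(1/\delta)(\ln T)^2}{\pi^2 \epsilon^2}$ plus lower-order terms.

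Comparing these leading coefficients, our bound is asymptotically smaller whenever $\ln(1/\delta) > \alpha(k)\ln T$, i.e.\ $\delta < T^{-\alpha(k)}$, where
\[
\alpha(k) \;=\; \frac{\pi^2\, k\,(1 - 1/k^2)}{8\,(\ln 2)^3\,(\log_2 k)^3}.
\]
It therefore suffices to exhibit an odd integer $k$ with $\alpha(k) \leq 0.92$. This reduces the claim to a one-dimensional numerical search over odd integers $k \geq 3$. Evaluating $\alpha$ shows it is minimized near $k = 19$, where a direct arithmetic computation gives $\alpha(19) \approx 0.916 < 0.92$, so $k = 19$ does the job.

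The main subtlety is verifying that the \emph{lower-order asymptotically vanishing terms} in both bounds can indeed be ignored when comparing. In the regime $\delta = O(T^{-0.92})$ one has $\ln(1/\delta) = \Theta(\ln T)$, so both bounds are $\Theta((\ln T)^3)$ and all subleading contributions from either side (our $o((\ln T)^3)$ remainder, and the additive constants inside $C_{\epsilon,\delta}^2$ together with the $1 + O(1)$ factor inside $(1 + \ln(4T/5)/\pi)^2$) contribute $o((\ln T)^3)$. Hence a strict inequality between the leading coefficients (guaranteed by $\alpha(19) < 0.92$) propagates to the desired asymptotic inequality between the full bounds. Apart from this bookkeeping, the proof is the arithmetic check that $\alpha(19) < 0.92$.
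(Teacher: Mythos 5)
Your proposal is correct and is essentially the paper's own proof: the paper likewise instantiates $k=19$ and compares leading constants (via \Cref{lemma:compare_errors}), computing the threshold exponent as the ratio $B_\epsilon/B_{\epsilon,\delta}$ with $B_\epsilon<0.124$ and $B_{\epsilon,\delta}=4/(\pi^2\log(e)^3)>0.1349$, which is exactly your $\alpha(19)\approx 0.916<0.92$ expressed in base-$2$ logarithms. The only difference is cosmetic: you convert everything to natural logarithms before comparing, while the paper keeps both bounds in base $2$.
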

Given that $\delta$ is commonly set to be $o(1/T)$, this result is surprising.
The phenomenon is discussed in \Cref{sec:laplace_vs_gauss}, but in a nutshell a sufficiently small $\delta$ must cause the error to greatly increase if the mechanism is based on Gaussian noise.
While improving constants for \puredp{} is of independent interest, \Cref{thm:beats_gaussian_noise} implies that further improvements also have implications for \approxdp{}.

Motivated by the dominance of Gaussian noise in applications, and the moral in \Cref{thm:beats_gaussian_noise} for Laplace noise, our second contribution is $(\epsilon, \delta)$-DP guarantees for the Laplace mechanism for given $\ell_1$ and $\ell_2$ sensitivities.
\begin{theorem}[Approximate DP for the Laplace Mechanism]\label{theorem:laplace_both}
    Let $f : \mathcal{X}^n \to \R^d$ be a function with $\ell_p$-sensitivity $\Delta_p \coloneqq \max_{\D\sim\D'}\norm{f(\D)-f(\D')}_p$.
    For a given dataset $\D\in\mathcal{X}^n$, $\delta \in (0, 1)$ and $\lambda > \Delta_1$, the output $f(\D) + \lap(\lambda)^d$, satisfies $(\epsilon, \delta)$-differential privacy where
    \begin{equation*}
        \epsilon = \min\bigg\{ \frac{\Delta_1}{\lambda}, \frac{\Delta_2}{\lambda}\bigg(\frac{\Delta_2}{2\lambda} + \sqrt{2\ln(1/\delta)}\bigg)\bigg\}\,.
    \end{equation*}
\end{theorem}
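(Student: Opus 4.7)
The plan is to analyze the privacy loss random variable (PLRV) of the Laplace mechanism directly, deriving the two bounds inside the minimum by two separate routes applied to the same object. For neighboring datasets $\D \sim \D'$ with offset $\vec{v} \coloneqq f(\D') - f(\D)$, writing the output under $\D$ as $f(\D) + \vec{z}$ with $z_i \sim \lap(\lambda)$ i.i.d., the log-density ratio factorizes as
\[
L(\vec{z}) \;=\; \log\frac{p_{\D}(f(\D)+\vec{z})}{p_{\D'}(f(\D)+\vec{z})} \;=\; \frac{1}{\lambda}\sum_{i=1}^{d} X_i, \qquad X_i \coloneqq |z_i - v_i| - |z_i|.
\]
By the textbook reduction (along the lines of the Dwork--Roth analysis of the Gaussian mechanism), any bound of the form $\Pr_{\vec{z}}[L(\vec{z}) > \epsilon] \leq \delta$ for every neighboring pair yields $(\epsilon,\delta)$-DP, so the whole task reduces to controlling the upper tail of $\sum_i X_i$. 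The triangle inequality gives $|X_i| \leq |v_i|$ pointwise, hence $L(\vec{z}) \leq \norm{\vec{v}}_1/\lambda \leq \Delta_1/\lambda$ deterministically, which is the first ($\delta=0$) term in the min.

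For the second term I would combine a sharp bound on the mean with Hoeffding's inequality. A direct one-variable integration against the Laplace density yields the closed form $\expectation[X_i] = |v_i| - \lambda(1 - e^{-|v_i|/\lambda})$, and the elementary inequality $e^{-u} \leq 1 - u + u^2/2$ for $u \geq 0$ (immediate by observing that the second derivative of the difference is nonpositive and both the value and first derivative vanish at $0$) gives the clean bound $\expectation[X_i] \leq v_i^2/(2\lambda)$. Summing, $\expectation[L(\vec{z})] \leq \norm{\vec{v}}_2^2/(2\lambda^2) \leq \Delta_2^2/(2\lambda^2)$. Since $X_i \in [-|v_i|,|v_i|]$, Hoeffding's inequality gives $\Pr\bigl[\sum_i X_i - \expectation[\sum_i X_i] > t\bigr] \leq \exp\bigl(-t^2/(2\norm{\vec{v}}_2^2)\bigr)$; choosing $t = \norm{\vec{v}}_2\sqrt{2\ln(1/\delta)}$ makes this at most $\delta$, and using $\norm{\vec{v}}_2 \leq \Delta_2$ yields
\[
\Pr\bigg[L(\vec{z}) > \frac{\Delta_2^2}{2\lambda^2} + \frac{\Delta_2}{\lambda}\sqrt{2\ln(1/\delta)}\bigg] \leq \delta,
\]
which matches the second expression in the min after factoring $\Delta_2/\lambda$.

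The main subtlety will be establishing the mean bound $\expectation[X_i] \leq v_i^2/(2\lambda)$ cleanly: a crude estimate coming from the triangle inequality would only give $|\expectation[X_i]| \leq |v_i|$, which would inject an $O(\Delta_1/\lambda)$ term into the final bound and destroy the structural parallel with the Gaussian mechanism. The quadratic control on the mean is precisely what forces both the mean contribution $\Delta_2^2/(2\lambda^2)$ and the deviation contribution $\Delta_2\sqrt{2\ln(1/\delta)}/\lambda$ to be governed purely by $\ell_2$-sensitivity. The hypothesis $\lambda > \Delta_1$ is not literally invoked in the concentration argument; it serves to keep the pure-DP term $\Delta_1/\lambda$ below $1$, so that both alternatives in the minimum correspond to non-trivial privacy guarantees.
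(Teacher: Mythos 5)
Your proof is correct, and it takes a genuinely different route from the paper's. The paper obtains the $\ell_2$-scaled branch by treating the release of each coordinate as a separate $\epsilon_j$-DP Laplace mechanism with $\epsilon_j \propto |\vec{y}_j - \vec{y}'_j|$ and invoking the heterogeneous advanced composition theorem of Kairouz, Oh and Viswanath, simplifying its bound via $\frac{e^x-1}{e^x+1}\le x/2$ and $\sum_j \epsilon_j^2 \le (a_{\epsilon,\delta}/\Delta_2)^2\norm{\vec{y}-\vec{y}'}_2^2$, then solving for the noise scale. You instead analyze the privacy loss random variable directly: the exact mean $\expectation[X_i] = |v_i| - \lambda(1-e^{-|v_i|/\lambda}) \le v_i^2/(2\lambda)$ plus Hoeffding on the bounded independent summands $X_i \in [-|v_i|,|v_i|]$ gives the tail bound at threshold $\Delta_2^2/(2\lambda^2) + (\Delta_2/\lambda)\sqrt{2\ln(1/\delta)}$, which matches the paper's expression exactly; the reverse triangle inequality gives the $\Delta_1/\lambda$ branch. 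The two arguments are morally cousins --- advanced composition is itself proved by concentration of the privacy loss --- but yours is self-contained, does not need the composition theorem as a black box, and makes transparent why the mean contribution is quadratic in $\Delta_2/\lambda$ while the fluctuation is linear. Your closing observation about $\lambda > \Delta_1$ is also consistent with the paper, which imposes it only to keep the pure-DP branch (and hence the minimum) below $1$.
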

Essentially we can implement \approxdp{} for the Laplace mechanism where the resulting $\epsilon$ is never worse than what simple composition gives, while also being competitive with the Gaussian mechanism in the regime where it is advantageous.

Finally, in \Cref{sec:lowerbound} we show the following lower bound, which matches the classical packing lower bound for continual observation under pure differential privacy~\cite{dwork_differential_2010, dwork_algorithmic_2013} in the case where the input is random.
\begin{theorem}\label{thm:lower}
    Let $\mathcal{U}$ be the distribution of strings of length $T$ where every element in the string is drawn from $B(1/2)$ (Bernoulli distribution with $p=1/2$).
    Let $\mathcal{M}$ be an $\varepsilon$-differentially private algorithm that solves the continual counting problem for strings from $\mathcal{U}$ with error at most $\alpha$ with probability at least $3/4$ (probability over random input and random choices of the algorithm).
    Then $\alpha=\Omega(\varepsilon^{-1}\log T)$.
\end{theorem}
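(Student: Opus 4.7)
My plan is to adapt the classical packing-based lower bound for $\varepsilon$-differentially private continual counting (which shows $\alpha=\Omega(\log(T)/\varepsilon)$ in the worst case) to the uniformly random input setting. First, I would apply Markov's inequality to convert the average-case guarantee into a density statement: writing $p(\vec{x})\coloneqq\Pr_{\mathcal{M}}[\|\mathcal{M}(\vec{x})-A\vec{x}\|_\infty\leq\alpha]$, the hypothesis $\mathbb{E}_{\vec{X}\sim\mathcal{U}}[p(\vec{X})]\geq 3/4$ implies that the ``good'' set $G\coloneqq\{\vec{x}:p(\vec{x})\geq 1/2\}$ satisfies $|G|\geq 2^{T-1}$.

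Next, I would recall the ingredients of the classical packing. One chooses $M=\Omega(T/\alpha)$ inputs $\vec{y}^{(1)},\ldots,\vec{y}^{(M)}$ with pairwise Hamming distances $O(\alpha)$ and pairwise prefix-sum $\ell_\infty$ distances at least $2\alpha$, so that the $\alpha$-balls $B_\alpha(A\vec{y}^{(j)})$ in the output space are pairwise disjoint. Pure-DP group privacy then gives
\[
1 \;\geq\; \sum_{j=1}^{M} \Pr\bigl[\mathcal{M}(\vec{y}^{(1)})\in B_\alpha(A\vec{y}^{(j)})\bigr] \;\geq\; M\cdot e^{-O(\alpha)\varepsilon}\cdot p_{\min},
\]
which rearranges to $\alpha=\Omega(\log(T)/\varepsilon)$ whenever each $\vec{y}^{(j)}\in G$.

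To place such a packing inside $G$, I would use a randomized construction around a typical center. Draw $\vec{x}^{\ast}\sim\mathcal{U}$ and partition $[T]$ into $M$ blocks $B_1,\ldots,B_M$ of width $w=\Theta(\alpha)$; let $\vec{x}^{(j)}$ agree with $\vec{x}^{\ast}$ outside $B_j$ and take a fixed pattern on $B_j$. By concentration of $\mathrm{Bin}(w,1/2)$ around $w/2$, with high probability over $\vec{x}^{\ast}$ the pairwise Hamming distances are $O(w)$ and the pairwise prefix-sum $\ell_\infty$ distances are $\Omega(w)\geq 2\alpha$. Because each $\vec{x}^{(j)}$ is uniform on a slice of $\{0,1\}^T$, linearity of expectation produces a choice of $\vec{x}^{\ast}$ for which at least $M/2$ of the $\vec{x}^{(j)}$ lie in $G$, and applying the packing inequality above to this sub-packing yields $\alpha=\Omega(\log(T)/\varepsilon)$.

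The main obstacle is reconciling the structure of the packing with the uniform input distribution. Classical packings place inputs on atypical configurations (for instance, spikes against a zero background) that $G$ need not cover, and a naive XOR-shift of the packing by a uniform random vector destroys the prefix-sum separation, turning it into a $\sqrt{\alpha}$-scale random walk rather than a deterministic $\Theta(\alpha)$ gap. Building the packing \emph{around} a random center, as above, resolves this at the cost of relying on concentration of block sums, and the delicate point is to argue that a single $\vec{x}^{\ast}$ simultaneously keeps enough packing points inside $G$ \emph{and} preserves the required packing gaps across all $M$ blocks.
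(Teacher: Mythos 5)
Your overall skeleton (Markov's inequality to extract a dense good set $G$, a packing of near-neighboring inputs with separated prefix sums, then group privacy plus disjointness) has the right shape and matches the paper's strategy at a high level. But there is a genuine gap at exactly the step you flag as delicate, and as written the key claim is false. You assert that because each $\vec{x}^{(j)}$ is uniform on a slice of $\{0,1\}^T$ (the strings agreeing with a fixed pattern on block $B_j$), linearity of expectation yields a center $\vec{x}^{\ast}$ with at least $M/2$ of the $\vec{x}^{(j)}$ in $G$. For that you need $\Pr[\vec{x}^{(j)}\in G]\geq 1/2$ for each $j$, but this probability is taken under the uniform distribution on a slice of measure $2^{-w}$, whereas the hypothesis only controls the measure of $G$ under the uniform distribution on all of $\{0,1\}^T$. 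A set $G$ of density $1/2$ (indeed of density $1-M2^{-w}$) can avoid every one of your $M$ slices entirely, in which case no packing point lies in $G$ and the argument collapses. Your own diagnosis of the dilemma is accurate --- a fixed pattern gives $\Theta(w)$ separation but an atypical distribution, while resampling the block gives a uniform distribution but only $O(\sqrt{w})$ separation --- yet the proposal does not resolve it.

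The paper's resolution is instructive: instead of overwriting a block, it flips exactly $k$ uniformly random $0$s to $1$s inside a block of width $B=\sqrt{T}$, with $k\leq T^{1/5}\ll\sqrt{B}$, starting from a center drawn from $\mathcal{U}$ conditioned on every block count lying in $[B/4,3B/4]$. This shifts the block count by exactly $k$ deterministically, so the prefix-sum difference to the center is exactly $k$ at the end of that block and zero at the ends of all earlier blocks, while the total variation distance of each perturbed string to $\mathcal{U}$ is $O(k/\sqrt{B})=O(T^{-1/20})=o(1)$, because shifting a near-central binomial count by $k\ll\sqrt{B}$ barely changes its law. Accuracy therefore transfers from $\mathcal{U}$ to every packing point with only an $o(1)$ loss --- precisely the statement your slice argument cannot deliver. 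Note that this forces the block width to be much larger than the gap ($B\gg k^{2}$), not $\Theta(\alpha)$ as in your construction; with $m=\sqrt{T}$ blocks one still gets $\log m=\Theta(\log T)$, and the packing is then run over the disjoint events recording the first block end at which the (noisy) difference exceeds $k/2$.
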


\subsection{Overview of technical ideas}\label{sec:tech_sketch}

\paragraph*{Improved tree-based mechanism}
We start out by building a complete $k$-ary tree of height $h=\ceil{\log_k(T+1)}$ on top of the input $\vec{x}\in\{0, 1\}^T$, see \Cref{fig: b-ary}.
To analyze this, it turns out that expressing a time step $t$ as a $k$-ary number, i.e., as $h$ digits in $\{0, \dots, k-1\}$, i.e.\ $\vec{t}\in\{0, \dots, k-1\}^h$, gives a direct means of computing which vertices contribute to a given prefix sum.
Increasing the arity of the tree reduces the height, which reduces the $\ell_1$-sensitivity and levels of vertices that might get added, but increases the number of vertices to add per level.
To mitigate the increase in number of vertices to add, we leverage \emph{subtraction}, see \Cref{fig: ternary}.

To analyze the structure of the resulting prefix sums when subtraction is introduced, it turns out that rather than expressing $t$ in digits from $0$ to $k-1$, we should instead allow for \emph{negative digits}.
Focusing on the case of odd $k$, we thus represent $t$ as $\vec{t}\in\{-\frac{k-1}{2},\dots, \frac{k-1}{2}\}^h$.
The interpretation of $\vec{t}$ becomes: if $\vec{t}_\ell \geq 0$, add $\vec{t}_\ell$ leftmost children on level $\ell$, otherwise, \emph{subtract} $|\vec{t}_\ell|$ rightmost children on level $\ell$.
Identifying that $\norm{\vec{t}}_1$ is equal to the number of vertices used to produce the estimate at time $t$, we can analyze the average number of vertices added for an estimate from this representation by studying $\vec{t}$ for $t\in[1, T]$.
This gives, roughly, a factor-2 reduction in the mean squared error versus not using subtraction.

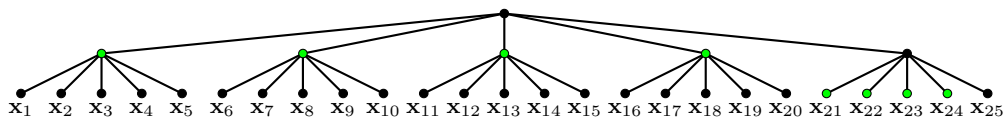
\begin{figure*}[b]
    \centering
\begin{tikzpicture}[scale=0.53]

\draw[color=black, thick] (1,1) -- (3,2);
\draw[color=black, thick] (2,1) -- (3,2);
\draw[color=black, thick] (3,1) -- (3,2);
\draw[color=black, thick] (4,1) -- (3,2);
\draw[color=black, thick] (5,1) -- (3,2);

\draw[color=black, thick] (6,1) -- (8,2);
\draw[color=black, thick] (7,1) -- (8,2);
\draw[color=black, thick] (8,1) -- (8,2);
\draw[color=black, thick] (9,1) -- (8,2);
\draw[color=black, thick] (10,1) -- (8,2);

\draw[color=black, thick] (11,1) -- (13,2);
\draw[color=black, thick] (12,1) -- (13,2);
\draw[color=black, thick] (13,1) -- (13,2);
\draw[color=black, thick] (14,1) -- (13,2);
\draw[color=black, thick] (15,1) -- (13,2);

\draw[color=black, thick] (16,1) -- (18,2);
\draw[color=black, thick] (17,1) -- (18,2);
\draw[color=black, thick] (18,1) -- (18,2);
\draw[color=black, thick] (19,1) -- (18,2);
\draw[color=black, thick] (20,1) -- (18,2);

\draw[color=black, thick] (21,1) -- (23,2);
\draw[color=black, thick] (22,1) -- (23,2);
\draw[color=black, thick] (23,1) -- (23,2);
\draw[color=black, thick] (24,1) -- (23,2);
\draw[color=black, thick] (25,1) -- (23,2);

\draw[color=black, thick] (3,2) -- (13,3);
\draw[color=black, thick] (8,2) -- (13,3);
\draw[color=black, thick] (13,2) -- (13,3);
\draw[color=black, thick] (18,2) -- (13,3);
\draw[color=black, thick] (23,2) -- (13,3);

\draw[black,fill=black] (1,1) circle (.7ex);
\node at (1,0.6) {{\footnotesize $\vec{x}_1$}};
\draw[black,fill=black] (2,1) circle (.7ex);
\node at (2,0.6) {{\footnotesize $\vec{x}_2$}};
\draw[black,fill=black] (3,1) circle (.7ex);
\node at (3,0.6) {{\footnotesize $\vec{x}_3$}};
\draw[black,fill=black] (4,1) circle (.7ex);
\node at (4,0.6) {{\footnotesize $\vec{x}_4$}};
\draw[black,fill=black] (5,1) circle (.7ex);
\node at (5,0.6) {{\footnotesize $\vec{x}_5$}};
\draw[black,fill=black] (6,1) circle (.7ex);
\node at (6,0.6) {{\footnotesize $\vec{x}_6$}};
\draw[black,fill=black] (7,1) circle (.7ex);
\node at (7,0.6) {{\footnotesize $\vec{x}_7$}};
\draw[black,fill=black] (8,1) circle (.7ex);
\node at (8,0.6) {{\footnotesize $\vec{x}_8$}};
\draw[black,fill=black] (9,1) circle (.7ex);
\node at (9,0.6) {{\footnotesize $\vec{x}_9$}};
\draw[black,fill=black] (10,1) circle (.7ex);
\node at (10,0.6) {{\footnotesize $\vec{x}_{10}$}};
\draw[black,fill=black] (11,1) circle (.7ex);
\node at (11,0.6) {{\footnotesize $\vec{x}_{11}$}};
\draw[black,fill=black] (12,1) circle (.7ex);
\node at (12,0.6) {{\footnotesize $\vec{x}_{12}$}};
\draw[black,fill=black] (13,1) circle (.7ex);
\node at (13,0.6) {{\footnotesize $\vec{x}_{13}$}};
\draw[black,fill=black] (14,1) circle (.7ex);
\node at (14,0.6) {{\footnotesize $\vec{x}_{14}$}};
\draw[black,fill=black] (15,1) circle (.7ex);
\node at (15,0.6) {{\footnotesize $\vec{x}_{15}$}};
\draw[black,fill=black] (16,1) circle (.7ex);
\node at (16,0.6) {{\footnotesize $\vec{x}_{16}$}};
\draw[black,fill=black] (17,1) circle (.7ex);
\node at (17,0.6) {{\footnotesize $\vec{x}_{17}$}};
\draw[black,fill=black] (18,1) circle (.7ex);
\node at (18,0.6) {{\footnotesize $\vec{x}_{18}$}};
\draw[black,fill=black] (19,1) circle (.7ex);
\node at (19,0.6) {{\footnotesize $\vec{x}_{19}$}};
\draw[black,fill=black] (20,1) circle (.7ex);
\node at (20,0.6) {{\footnotesize $\vec{x}_{20}$}};
\draw[black,fill=green] (21,1) circle (.7ex);
\node at (21,0.6) {{\footnotesize $\vec{x}_{21}$}};
\draw[black,fill=green] (22,1) circle (.7ex);
\node at (22,0.6) {{\footnotesize $\vec{x}_{22}$}};
\draw[black,fill=green] (23,1) circle (.7ex);
\node at (23,0.6) {{\footnotesize $\vec{x}_{23}$}};
\draw[black,fill=green] (24,1) circle (.7ex);
\node at (24,0.6) {{\footnotesize $\vec{x}_{24}$}};
\draw[black,fill=black] (25,1) circle (.7ex);
\node at (25,0.6) {{\footnotesize $\vec{x}_{25}$}};

\draw[black,fill=green] (3,2) circle (.7ex);
\draw[black,fill=green] (8,2) circle (.7ex);
\draw[black,fill=green] (13,2) circle (.7ex);
\draw[black,fill=green] (18,2) circle (.7ex);
\draw[black,fill=black] (23,2) circle (.7ex);
\draw[black,fill=black] (13,3) circle (.7ex);

\end{tikzpicture}     \label{fig: 5-ary}
\caption{Complete 5-ary tree with 25 leaves containing inputs $\vec{x}_1,\dots,\vec{x}_{25}$ and inner vertices containing subtree sums.
To compute the sum of the first 24 inputs we can add 4 inner vertices and 4 leaves (shown in green) --- in general, the worst-case number of vertices for $k$-ary trees is $k-1$ per level.
When subtree sums are made private by random (Laplace) noise the variance of a prefix sum estimator is proportional to the number of terms added.
In \cref{sec:kary_trees} we analyze the privacy and utility of this natural generalization of the binary tree mechanism to $k$-ary trees, among other things showing that the mean number of terms is about half of the worst case.
}
\label{fig: b-ary}
\end{figure*}
\begin{figure*}[h!b]
\begin{subfigure}[h]{0.45\textwidth}
    \centering
\begin{tikzpicture}[scale=0.65]

\draw[color=black, thick] (1,1) -- (2,2);
\draw[color=black, thick] (2,1) -- (2,2);
\draw[color=black, thick] (3,1) -- (2,2);

\draw[color=black, thick] (4,1) -- (5,2);
\draw[color=black, thick] (5,1) -- (5,2);
\draw[color=black, thick] (6,1) -- (5,2);

\draw[color=black, thick] (7,1) -- (8,2);
\draw[color=black, thick] (8,1) -- (8,2);
\draw[color=black, thick] (9,1) -- (8,2);

\draw[color=black, thick] (2,2) -- (5,3);
\draw[color=black, thick] (5,2) -- (5,3);
\draw[color=black, thick] (8,2) -- (5,3);

\draw[black,fill=black] (1,1) circle (.7ex);
\node at (1,0.6) {{\footnotesize $\vec{x}_1$}};
\draw[black,fill=black] (2,1) circle (.7ex);
\node at (2,0.6) {{\footnotesize $\vec{x}_2$}};
\draw[black,fill=black] (3,1) circle (.7ex);
\node at (3,0.6) {{\footnotesize $\vec{x}_3$}};
\draw[black,fill=black] (4,1) circle (.7ex);
\node at (4,0.6) {{\footnotesize $\vec{x}_4$}};
\draw[black,fill=black] (5,1) circle (.7ex);
\node at (5,0.6) {{\footnotesize $\vec{x}_5$}};
\draw[black,fill=black] (6,1) circle (.7ex);
\node at (6,0.6) {{\footnotesize $\vec{x}_6$}};
\draw[black,fill=green] (7,1) circle (.7ex);
\node at (7,0.6) {{\footnotesize $\vec{x}_7$}};
\draw[black,fill=green] (8,1) circle (.7ex);
\node at (8,0.6) {{\footnotesize $\vec{x}_8$}};
\draw[black,fill=black] (9,1) circle (.7ex);
\node at (9,0.6) {{\footnotesize $\vec{x}_9$}};

\draw[black,fill=green] (2,2) circle (.7ex);
\draw[black,fill=green] (5,2) circle (.7ex);
\draw[black,fill=black] (8,2) circle (.7ex);
\draw[black,fill=black] (5,3) circle (.7ex);

\end{tikzpicture}     \caption{Prefix sum with only vertex addition}
    \label{fig: ter2}
\end{subfigure}
\hfill
\begin{subfigure}[h]{0.45\textwidth}
    \centering
\begin{tikzpicture}[scale=0.65]

\draw[color=black, thick] (1,1) -- (2,2);
\draw[color=black, thick] (2,1) -- (2,2);
\draw[color=black, thick] (3,1) -- (2,2);

\draw[color=black, thick] (4,1) -- (5,2);
\draw[color=black, thick] (5,1) -- (5,2);
\draw[color=black, thick] (6,1) -- (5,2);

\draw[color=black, thick] (7,1) -- (8,2);
\draw[color=black, thick] (8,1) -- (8,2);
\draw[color=black, thick] (9,1) -- (8,2);

\draw[color=black, thick] (2,2) -- (5,3);
\draw[color=black, thick] (5,2) -- (5,3);
\draw[color=black, thick] (8,2) -- (5,3);

\draw[black,fill=black] (1,1) circle (.7ex);
\node at (1,0.6) {{\footnotesize $\vec{x}_1$}};
\draw[black,fill=black] (2,1) circle (.7ex);
\node at (2,0.6) {{\footnotesize $\vec{x}_2$}};
\draw[black,fill=black] (3,1) circle (.7ex);
\node at (3,0.6) {{\footnotesize $\vec{x}_3$}};
\draw[black,fill=black] (4,1) circle (.7ex);
\node at (4,0.6) {{\footnotesize $\vec{x}_4$}};
\draw[black,fill=black] (5,1) circle (.7ex);
\node at (5,0.6) {{\footnotesize $\vec{x}_5$}};
\draw[black,fill=black] (6,1) circle (.7ex);
\node at (6,0.6) {{\footnotesize $\vec{x}_6$}};
\draw[black,fill=black] (7,1) circle (.7ex);
\node at (7,0.6) {{\footnotesize $\vec{x}_7$}};
\draw[black,fill=black] (8,1) circle (.7ex);
\node at (8,0.6) {{\footnotesize $\vec{x}_8$}};
\draw[black,fill=red] (9,1) circle (.7ex);
\node at (9,0.6) {{\footnotesize $\vec{x}_9$}};

\draw[black,fill=black] (2,2) circle (.7ex);
\draw[black,fill=black] (5,2) circle (.7ex);
\draw[black,fill=black] (8,2) circle (.7ex);
\draw[black,fill=green] (5,3) circle (.7ex);

\end{tikzpicture}     \caption{Prefix sum with vertex subtraction}
    \label{fig: ter1}
\end{subfigure}
\caption{Complete ternary trees, each with 9 leaves containing inputs $x_1,\dots,x_{9}$ and inner vertices containing subtree sums.
The trees illustrate two different possibilities for computing the sum of the first 8 inputs:
(\ref{fig: ter2}) add two inner vertices and two leaf vertices, or (\ref{fig: ter1}) subtract one vertex (shown in red) from the sum stored in the root vertex (shown in green).
The variance of using the latter method, with subtraction, is half of the former method, assuming the noise distribution for each vertex is the same.
In \cref{sec:ksub} we analyze the privacy and utility of our new generalization of the binary tree mechanism to $k$-ary trees that makes use of subtraction.
Among other things we show that the worst-case number of terms needed per level is $\lfloor k/2\rfloor$, and the mean number of terms is about half of the worst case.}
\label{fig: ternary}
\end{figure*}
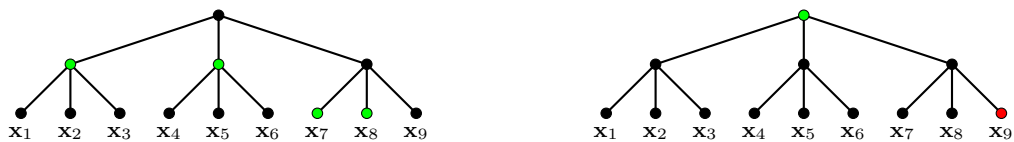

\paragraph*{Factorization mechanism with Laplace noise}
To derive the \approxdp{} guarantee of the Laplace mechanism (Theorem~\ref{theorem:laplace_both}), we make use of a heterogeneous composition theorem \cite{kairouz15}.
Given two arbitrary multidimensional neighboring inputs, we argue for the privacy loss in each coordinate when one of them is released with respect to its neighbor.
To get the aggregate privacy guarantee for the entire vector, we apply the composition theorem over the individual coordinates being released, and show that the guarantee must hold for any two neighboring outputs.
It turns out that, as in the case of the Gaussian mechanism, the \approxdp{} guarantee for Laplace noise naturally depends on the $\ell_2$-sensitivity of the function being released.

\paragraph*{Lower bound for random inputs}
Finally, the lower bound of Theorem~\ref{thm:lower} is shown by a variation of the packing argument that has been used to show an $\Omega(\log(T)/\epsilon)$ lower bound for worst-case inputs. 
The core idea is to construct a set of \emph{correlated} input strings that are individually close to uniformly random, but a continual observation mechanism that has low maximum error on all of them would be able to reliably distinguish these input strings.
More precisely, for integers $T$, $B$, $m=T/B$ and $k\leq B/4$, we construct our collection of strings $\vec{x}^{(0)}, \vec{x}^{(1)}, \dots, \vec{x}^{(m)}$ by first decomposing $[T]$ into blocks of size $B : B_1 = [1, B], B_2 = [B+1, 2B], \dots, B_m = [T-B+1, T]$ and then doing the following sampling process:
$\vec{x}^{(0)}$ is sampled uniformly from $\{0, 1\}^T$ conditioned on each block $B_i$ containing between $B/4$ and $3B/4$ 1s,
and for $i\in[m]$, $\vec{x}^{(i)}$ is derived from $\vec{x}^{(0)}$ by flipping $k$ 0s to 1s in block $B_i$.
We show that each individual string has $o(1)$ total variation distance to the uniform distribution over $\{0, 1\}^T$, and thus a mechanism accurate on the uniform distribution will also be accurate on our random strings. Also, pairs of our random strings will be $2k$-neighboring with probability 1.
The key insight is that the difference of the true count of $\vec{x}^{(i)}$ minus $\vec{x}^{(0)}$ is $k$ at the end of $B_i$, and zero at the end of each preceding $B_j$ for $j < i$, and this event is detectable if the mechanism is sufficiently accurate on both inputs.
Choosing parameters carefully, a packing-like argument in the end shows that a too low error on uniformly distributed inputs contradicts differential privacy.

\subsection{Related work}\label{sec:related_work}

Throughout this section, we focus on \emph{unbiased} mechanisms for continual counting with \enquote{good error}.
Here \enquote{error} is a shorthand for mean squared error over~$T$ outputs, and \enquote{good} is an error of $O(\log(T)^3)$ for $\epsilon$-DP mechanisms and $O(\log(T)^2\log(1/\delta))$ for $(\epsilon, \delta)$-DP mechanisms.
We note that there exists some biased mechanisms \cite{dwork_rectangle_queries_2015}, but they focus on reducing the error for \emph{sparse} inputs and give no improvement for dense streams.
Unless stated otherwise, constant factor improvements are always for $\epsilon$-DP.

Historically, the first mechanisms with good error are typically credited to Chan, Shi, and Song~\cite{chan_private_2011} and Dwork, Naor, Pitassi, and Rothblum~\cite{dwork_differential_2010}, though similar constructions were proposed independently by others \cite{dp_histograms_2010, dp_wavelet_2010}.
These results are commonly referred to as \emph{binary tree mechanisms}. 
Honaker~\cite{honaker2015}
observed that the full tree used by the binary tree mechanism can be used to produce more efficient estimates since each prefix query can be computed using different combinations of vertices.
A subset of Honaker's techniques can be used to reduce the mean squared error by up to a factor 2, at some cost in efficiency. %
Rather than leveraging the density of the tree to improve the error, Andersson and Pagh~\cite{andersson2024smooth} identified that making the tree sparser -- only using a subset of the leaves to store inputs -- allowed for a factor 2 reduction in the error while keeping the variance of outputs constant, for $(\epsilon, \delta)$-DP.
Although they do not state a result for it, using their technique for $\epsilon$-DP yields a factor 4 improvement over the binary tree mechanism, which is still more than a factor 2 away from our improvement.
We also mention the work of Qardaji, Yang, and Li~\cite{qardaji_2013} which, while focusing on range queries, did investigate reducing the mean squared error by means of considering trees of higher arity.
They claim an optimal arity of $k=16$, but average over a different set of queries, and do not leverage the subtraction of vertices.
Cardoso and Rogers~\cite{cardoso_differentially_2022} used $k$-ary trees for prefix sums as a subroutine for \approxdp{} and investigated choosing $k$ to minimize the maximum variance.

Many recent papers \cite{dj_near_optimal_noise_generation_24, denissov_improved_2022, henzinger_almost_2023, fichtenberger_constant_2023} have focused on improving the error for continual counting for $(\epsilon, \delta)$-DP by directly searching among factorizations of $A$, the lower-triangular all-1s matrix.
Most notably, Henzinger, Upadhyay, and Upadhyay~\cite{henzinger_almost_2023} showed that there exists an optimal factorization $L=R=\sqrt{A}$ (also identified by Bennett~\cite{bennett77}) whose error is \emph{optimal up to the leading constant} across all factorizations. But, importantly, this is for $(\epsilon, \delta)$-DP and assumes using Gaussian noise.
Due to the high $\ell_1$-sensitivity of $R=\sqrt{A}$, the noise needed to release $R\vec{x}$ with the (conventional) Laplace mechanism results in an \puredp{} matrix mechanism with worse error than that of the binary mechanism.

The second part of our paper explores the use of Laplace noise for \approxdp{}, especially in the small $\delta$ regime -- a regime we are not the first to investigate \cite{SteinkeU16}.
We claim no novelty in pointing out that the Gaussian mechanism scales poorly for small $\delta$ (see e.g.\ \cite{DongSZ21}).
We do note that there is a rich literature investigating alternative (non-Gaussian) noise distribution for approximate DP \cite{GengDGK20, HolohanABA20, Liu19, dagan22a, GengDGK19,Vinterbo22}.
In \cite{SommerMM19} they compare how Laplace and Gaussian noise behaves over many compositions, and especially observe that eventually they will both converge to Gaussian privacy loss distributions.

Finally, we turn to lower bounds where the lower bound of~\cite{dwork_differential_2010} remains the best bound known on the noise needed for continual observation under pure differential privacy.
The hard inputs used in the proof (see~\cite{dwork_algorithmic_2013} for details) are very sparse vectors with a single block of 1s that can be reliably distinguished by any algorithm that outputs prefix sums with noise significantly smaller than the block size, yet are close to each other in neighbor distance.
This structure means that a ``packing argument'' can be used to show a lower bound on the amount of noise needed.
One might wonder if \emph{most} inputs are easier and allow a smaller error --- our lower bound shows that this is not the case.
Some algorithmic problems have the property that worst-case problem instances can be reduced to random instances (for example~\cite{henzinger2022_subgraphs}).
Though we are not aware of such a reduction for continual counting, our lower bound can be seen as an instantiation of this general principle.

\section{Preliminaries}\label{sec:prel}

Before introducing any mechanism, we first introduce some notation and background.
We use $\N$ to denote all positive integers, and for $a, b\in\N$ let $[a, b]\subset\N$ refer to the set of integers $\{a, a+1\dots, b\}$.
A vector-valued quantity $\mathbf{y}\in\mathbb{R}^n$ is expressed in bold notation and its scalar entries are written as $\mathbf{y}_i$ for $i\in [1, n]$.
Standard definitions for \puredp{} and \approxdp{} are given in \cref{appendix:dp}.

As the main goal of this paper is the study and design of differentially private algorithms for counting queries, we will first formalize the problem.
Let vector $\vec{x}\in \{0, 1\}^T$ be an input stream, where $\vec{x}_t$ is received at time $t$. Then, two input streams $\vec{x}, \vec{x}'$ are \emph{neighboring inputs}, denoted $\vec{x}\sim\vec{x}'$, if they take on the same value at every time step except one.
The target function we wish to privately release is $f : \{0, 1\}^T \to \R^T$ where $f(\vec{x})_t = \sum_{i=1}^t \vec{x}_i$.
We say that a random algorithm $\M: \{0, 1\}^T \to \mathbb{R}^T$ is a differentially private mechanism for continual counting if it satisfies either \puredp{} or \approxdp{} under this neighboring relation.

For the entirety of this paper, we will express the utility of an algorithm in terms of its \emph{mean squared error} defined as:
$\mse(\M, \vec{x}) = \frac{1}{T}\expectation\big[\norm{\M(x) - f(x)}_2^2\big] = \frac{1}{T}\sum_{t=1}^T \var [\M(\vec{x})_t]\,,$
where the second equality only holds if $\forall \vec{x}\in\{0, 1\}^T : \expectation[\M(\vec{x})] = f(\vec{x})$, which is the case for all factorization mechanisms in general, and all mechanisms covered in this work in particular.
That is to say, for unbiased mechanisms $\M$, their mean squared error will be equal to the average variance.
Moving forward, any mention of error without further qualification will refer to the mean squared error.

\section{Mechanisms}
In this section, we will gradually build up towards the algorithm outlined in \cref{thm:main}.
We begin by describing the binary tree mechanism, which will be our starting point.
Then we will consider how the utility improves when we extend to $k$-ary trees, and we give an algorithm and results for this case.
Finally, our main contribution will be to incorporate \emph{subtraction} of vertices in $k$-ary trees to further improve the utility.
The formal proofs leading to \cref{thm:main} are given in the next section.

\subsection{Binary Tree Mechanisms}\label{sec:binary_mech}
The first algorithms~\cite{chan_private_2011,dwork_differential_2010, dp_wavelet_2010, dp_histograms_2010} with good utility are based on binary trees and are referred to as the \emph{binary tree mechanism}. %
Implementation details vary, but we will consider a formulation of it where only the left subtrees are used, and in particular not the root vertex, structurally most similar to Algorithm~2 in \cite{chan_private_2011}.
The main idea for the algorithm is to build a binary tree of height $h=\ceil{\log(T+1)}$ on top of the input stream $\vec{x}$ where each leaf stores an input $\vec{x}_t$, and each vertex higher up in the tree stores the sum of its children.
Having such a tree, it is possible to add together vertices in the tree in such a way that at most $1$ vertex per level in the tree is used, up to $h$ in total, where the sum corresponds to a prefix sum up to $t\in[1, T]$.
In particular, which vertices that get summed to produce a given prefix sum is described exactly by the $h$-bit \emph{binary representation} of the current time step, where each $1$ corresponds to adding a vertex in the tree to the output.
To make this private, we add i.i.d.\ Laplace noise $\lap(\Delta_1/\epsilon)$ to each vertex in the tree, where $\Delta_1=h$ since the trees for two neighboring inputs will differ in one vertex per level, excluding the root which is not used in computations.
We defer making exact statements about the corresponding utility to \cref{sec:kary_trees}, where this description of the binary tree mechanisms is a special case of \cref{alg:nosub} for $k=2$.

\subsection{\texorpdfstring{$k$}{k}-ary Tree Mechanisms}\label{sec:kary_trees}
We will next consider trees of greater arity $k\geq 2$ to improve constants for the error.
We note that using trees of higher arity has already been studied~\cite{qardaji_2013, cormode_range_queries_2019, cardoso_differentially_2022}.
With a greater arity tree, a lesser height is needed to accommodate a given number of leaves.
However, it might also require adding more vertices per level to generate a given prefix sum estimate.
To explore this trade-off, we introduce notation for $k$-ary trees next.
\begin{definition}[$k$-ary trees]\label{def:trees}
    For integers $k\geq 2$ and $h\geq 1$, we define a $k$-ary tree of height $h$ with $k^h$ leaves as $\T_{k,h}$. 
    Then for integer $1\leq\ell\leq h+1$, let $\T_{k,h}^\ell = \{ [1 + j\cdot k^{\ell-1}, (j+1)\cdot k^{\ell-1}] : j\in\mathbb{N},\, 0 \leq j < k^{h-\ell+1}\}$ be the set of vertices on level $\ell$, where a vertex $I\in\T_{k, h}^\ell$ is a child of $I'\in\T_{k, h}^{\ell + 1}$ if $I\subset I'$, and where $\T_{k, h} = \bigcup_{1 \leq\ell\leq h+1}\T^\ell_{k, h}$.
\end{definition}
Additionally, let $\Sigma(I) = \sum_{i\in I} \vec{x}_i$ for $I\in\T_k$ when such a tree is used to store partial sums on $\vec{x}$, $\nSigma$ for the case where a vertex in the tree stores a partial sum plus noise.
Also, we will find it expedient to express integers using $k$-ary digits, which we define next.
\begin{definition}[$k$-ary representation of integer]
    Let $k$, $w$ and $t \leq k^w - 1$ be positive integers.
    We then define $\vec{t} = \enc_k(t, w)\in[0, k-1]^w$ as the unique vector in $[0, k-1]^w$ satisfying $\sum_{i=1}^w k^{i-1}\vec{t}_i = t$.
\end{definition}
The intuition behind using $k$-ary to analyse $k$-ary tree mechanisms is the same as in the case where $k=2$ -- there is a clean analogue between the two.
The $h$-digit $k$-ary representation of a time step $t$, $\vec{t}=\enc_k(t, h)$, encodes a set of vertices in $\T_{k, h}$ whose sum (when interpreted as partial sums) equals the prefix sum up to $t$.
In particular, $\vec{t}_\ell$ gives the number of vertices that get added at level $\ell$ of the tree.
With this observation in mind, we introduce \cref{alg:nosub}.
\begin{algorithm}[htb]
   \caption{$k$-ary Tree Mechanism}\label{alg:nosub}

   \begin{algorithmic}[1] %
    \STATE {\bfseries Input:} $\vec{x}\in \{0, 1\}^T$, $k\geq 2$, privacy parameter $\epsilon$
    \STATE $h\gets \ceil{\log_k(T+1)}$
    \STATE Construct $k$-ary tree $\T_{k,h}$.
    \STATE For all $I\in\T_{k,h}$ compute $\nSigma(I) = \Sigma(I) + z$ for i.i.d.\ $z\sim\lap(h/\epsilon)$.
    \FOR{$t=1$ {\bfseries to} $T$}
        \STATE $count \gets 0$
        \STATE $\vec{t} \gets \enc_k(t, h)$
        \STATE $p \gets 0$
        \FOR{$\ell=h$ {\bfseries to} $1$}
            \FOR{$j=1$ {\bfseries to} $\vec{t}_\ell$}
                \STATE $count \gets count + \widehat{\Sigma}([p + 1, p + k^{\ell-1}])$
                \STATE $p \gets p + k^{\ell-1}$
            \ENDFOR
        \ENDFOR
        \STATE At time $t$, output $count$
    \ENDFOR
    \end{algorithmic}
\end{algorithm}

Now we state the properties of \cref{alg:nosub}.
We begin by discussing the privacy of the output of the algorithm.
Subsequently, we examine the variance and average number of used vertices to analyze the mean squared error.
Finally, we analyze the time and space complexity of the algorithm.
\begin{lemma}\label{lemma:nosub_private}
    The output from \cref{alg:nosub} satisfies $\epsilon$-DP.
\end{lemma}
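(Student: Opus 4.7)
The plan is to reduce the claim to a standard application of the Laplace mechanism followed by post-processing. The sequence of outputs produced by \cref{alg:nosub} is a deterministic function of the collection of noisy subtree sums $\{\nSigma(I)\}_{I \in \T_{k,h}^1 \cup \cdots \cup \T_{k,h}^h}$ (the root level $h+1$ is never touched by the inner loop), so by the post-processing property of differential privacy it suffices to show that releasing this collection is $\epsilon$-DP.

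First I would fix two neighboring inputs $\vec{x}\sim\vec{x}'$ that differ exactly at coordinate $t^* \in [1, T]$ and count how many of the subtree sums change. By \cref{def:trees}, an interval $I \in \T_{k,h}^\ell$ either contains $t^*$ or does not, and on each level $\ell \in [1, h]$ the intervals partition $[1, k^h]$, so exactly one interval per level contains $t^*$. Thus $\Sigma(I)$ and the corresponding quantity for $\vec{x}'$ differ by exactly $1$ for precisely $h$ vertices and coincide for the remaining vertices in levels $1, \ldots, h$. Consequently, the $\ell_1$-sensitivity of the vector-valued map $\vec{x} \mapsto (\Sigma(I))_{I \in \T_{k,h}^1 \cup \cdots \cup \T_{k,h}^h}$ is exactly $h$.

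Next I would invoke the classical Laplace mechanism: perturbing each coordinate of a function with $\ell_1$-sensitivity $\Delta_1$ by independent $\lap(\Delta_1/\epsilon)$ noise yields $\epsilon$-DP. Since \cref{alg:nosub} adds i.i.d.\ $\lap(h/\epsilon)$ noise to each $\Sigma(I)$, the released noisy tree is $\epsilon$-DP. Post-processing then gives $\epsilon$-DP for the entire stream of outputs.

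The only step that requires a little care is confirming that exactly one vertex per level contains $t^*$ and that the root level does not need to be accounted for; both facts follow directly from \cref{def:trees} and from inspecting the inner loop of \cref{alg:nosub}, so no real obstacle arises.
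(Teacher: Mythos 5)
Your proof is correct and follows essentially the same route as the paper's: view everything after the noisy tree release as post-processing, observe that each leaf index lies in exactly one vertex per level so the $\ell_1$-sensitivity of the released (non-root) subtree sums is $h$, and apply the Laplace mechanism. The paper additionally verifies that line~11 only ever references vertices actually present in the tree, but that is a well-definedness check rather than a privacy step, so your argument is complete as a privacy proof.
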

\begin{proof}
    Note that line~11 always adds the noisy count from a vertex in the tree at level $\ell\in [1, h]$  since $p=p_n\overset{\pmod {k^{\ell-1}}}{=} 0$, and since $\vec{t}$ is a valid representation of a time step $t \leq T$, we also have $p\leq t$, and so the vertex must be contained in the tree.
    Note further that everything after line~4 can be considered post-processing of privately releasing the tree $\T_{k, h}$ storing partial sums.
    Since we never use the root vertex of the tree at level $h + 1$, any two neighboring inputs will differ at exactly $h$ vertices which we use, therefore $\Delta_1 = h$, implying that releasing the individual counts in the tree with $\lap(h/\epsilon)$ gives privacy.
\end{proof}
For the utility, note that the output from \cref{alg:nosub} at time $t$ has variance $(2h^2/\epsilon^2)\cdot\norm{\enc_k(t, h)}_1$, since $\norm{\enc_k(t, h)}_1$ vertices with variance $(2h^2/\epsilon^2)$ are added up. Consequently, we can conclude this subsection with the following Lemma and Theorem.

\begin{lemma}\label{lemma:nosub_mse}
    \cref{alg:nosub} achieves a mean squared error of $\frac{(k-1)h^3}{\epsilon^2(1 - 1/k^h)}$ over $T=k^h - 1$ outputs.
\end{lemma}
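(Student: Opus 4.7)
The plan is to directly apply the variance formula stated just before the lemma and then reduce the proof to a standard digit-counting identity for $k$-ary representations.

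First I would write out the mean squared error as the average variance:
\[
\mse(\Mmain, \vec{x}) = \frac{1}{T}\sum_{t=1}^{T} \var[\Mmain(\vec{x})_t] = \frac{2h^2}{\epsilon^2\,T}\sum_{t=1}^{T} \norm{\enc_k(t,h)}_1,
\]
using the observation, already made in the paragraph before the lemma, that the output at time $t$ is a sum of $\norm{\enc_k(t,h)}_1$ independent vertices, each carrying an i.i.d.\ $\lap(h/\epsilon)$ noise term with variance $2h^2/\epsilon^2$.

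The central combinatorial step is to evaluate $S \coloneqq \sum_{t=0}^{k^h-1}\norm{\enc_k(t,h)}_1$ (the $t=0$ term is zero, so it equals the sum over $t\in[1,T]$ when $T=k^h-1$). I would swap the order of summation over $t$ and over digit positions $\ell\in[1,h]$:
\[
S = \sum_{\ell=1}^{h}\sum_{t=0}^{k^h-1} \enc_k(t,h)_\ell .
\]
Fixing $\ell$, as $t$ ranges over $[0,k^h-1]$ the $\ell$-th digit cycles uniformly through $\{0,1,\dots,k-1\}$, taking each value exactly $k^{h-1}$ times. Hence the inner sum equals $k^{h-1}\sum_{j=0}^{k-1} j = k^{h-1}\cdot\tfrac{k(k-1)}{2}$, independent of $\ell$, so $S = \tfrac{h(k-1)k^h}{2}$.

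Finally I would substitute $T = k^h - 1$ and simplify:
\[
\mse = \frac{2h^2}{\epsilon^2(k^h-1)}\cdot\frac{h(k-1)k^h}{2} = \frac{(k-1)h^3}{\epsilon^2}\cdot\frac{k^h}{k^h-1} = \frac{(k-1)h^3}{\epsilon^2(1-1/k^h)},
\]
which is the claimed expression. There is no real obstacle here: the argument is essentially a single digit-counting identity, and the only thing to be careful about is that the range $t\in[1,T]$ with $T=k^h-1$ gives exactly the full set of $h$-digit $k$-ary strings (minus the all-zero one, which contributes nothing to the sum), so the clean factorization of $S$ goes through without boundary corrections.
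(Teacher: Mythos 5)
Your proof is correct and follows essentially the same route as the paper's: both reduce the MSE to the per-vertex variance $2h^2/\epsilon^2$ times the average $\ell_1$-weight of the $k$-ary representations, compute that average over all of $\{0,\dots,k-1\}^h$ via uniform digit counting, and then adjust for the excluded all-zero string to get the $1/(1-1/k^h)$ factor. The only cosmetic difference is that you make the interchange of summation over digit positions explicit, whereas the paper states the per-string average weight $h(k-1)/2$ directly.
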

\begin{proof}
    Identifying that the mean squared error over $T$ steps is equal to the average number of vertices added together to produce an output, call this number $a_{[1, T]}$, times the variance of each vertex, $2h^2/\epsilon^2$.
    Observe that $a_{[1, T]} = \frac{1}{T}\sum_{t=1}^T \norm{\enc_k(t, h)}_1$ and that $T$ is the greatest integer possible to represent with $h$ digits from $\{0, 1, \dots, k-1\}$.
    This implies that we can compute the average $\ell_1$-weight over $\{0, 1,\dots, k-1\}^h$, $a_{[0, T]} = h\cdot\frac{1}{k}\sum_{d=0}^{k-1} |d| = \frac{h(k-1)}{2}$.
    Since $T\cdot a_{[1, T]} = (T + 1)\cdot a_{[0, T]}$, we have that $a_{[1, T]} = (1 + 1/T)\cdot a_{[0, T]} = \frac{h(k-1)}{2(1 - 1/k^h)}$.
    Plugging this in gives a mean squared error of $\frac{(k-1)h}{2(1-1/k^h)}\cdot \frac{2h^2}{\epsilon^2} = \frac{(k-1)h^3}{\epsilon^2(1 - 1/k^h)}$.
\end{proof}
\begin{theorem}\label{thm:no_subtraction}
    Given integer $T\geq 2$ and constant integer $k\geq 2$, there exists an $\epsilon$-DP mechanism for continual counting that on receiving a stream $\vec{x}_1, \vec{x}_2, \dots, \vec{x}_T$ achieves a mean squared error of $\frac{k-1}{\epsilon^2\log(k)^3}\cdot\log(T)^3 + o(\log(T)^3)$, computes all outputs in time $O(T)$ and $O(\log T)$ space.
\end{theorem}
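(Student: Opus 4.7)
The plan is to instantiate \cref{alg:nosub} with $h = \lceil \log_k(T+1) \rceil$ so that every time step $t \in [1, T]$ has a valid $h$-digit $k$-ary representation. Privacy follows immediately from \cref{lemma:nosub_private}, whose argument only uses that the root is never read and therefore already covers any $T \leq k^h - 1$. What remains is utility and complexity.

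For the mean squared error, I would write $\mse = (2h^2/\epsilon^2) \cdot a_{[1, T]}$ where $a_{[1,T]} = \frac{1}{T}\sum_{t=1}^T \norm{\enc_k(t, h)}_1$ is the average number of tree vertices summed per output. \cref{lemma:nosub_mse} pins down this average when $T = k^h - 1$; for an arbitrary $T$, I would generalise the counting argument by writing $T = c\,k^{h-1} + r$ with $c \in \{1, \dots, k-1\}$ and $r \in [0, k^{h-1})$, and computing the contribution of each digit position separately. The lower $h-1$ digits cycle essentially uniformly over $\{0,\dots,k-1\}$ and contribute $(h-1)(k-1)/2$ in expectation; the top digit equals $j$ on a block of $k^{h-1}$ values for $j = 0,\dots,c-1$ and equals $c$ on the remaining $r$ values, so its expected value is bounded by $(c-1)/2 + r(c+1)/(2T) \leq k-1$. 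Summing, $a_{[1,T]} \leq h(k-1)/2 + (k-1)/2$, hence $\mse \leq (k-1)h^3/\epsilon^2 + (k-1)h^2/\epsilon^2$. Since $k$ is constant and $h = \log(T)/\log(k) + O(1)$, expanding the cube yields $(k-1)\log(T)^3/(\epsilon^2 \log(k)^3) + o(\log(T)^3)$ as claimed.

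For the complexity bounds, I would argue that \cref{alg:nosub} admits a streaming implementation that never materialises the entire tree: at any time $t$ we only need the $O(h)$ ``active'' partial subtree sums, one per level currently being filled, together with their lazily drawn Laplace noise and a counter for $t$, which fits in $O(\log T)$ space. As $t$ advances, the representation $\enc_k(t, h)$ changes by a carry in a $k$-ary counter, which costs $O(1)$ amortised per step (the carry reaches level $\ell$ only once every $k^{\ell-1}$ steps); the running output can be maintained incrementally by adding the newly incremented leaf/subtree value and, on a carry, replacing the children's contributions by that of the just-finalised parent subtree. This yields $O(T)$ total time.

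The main obstacle I foresee is the first step: extending \cref{lemma:nosub_mse} to general $T$ without losing a factor of $k$ in the leading constant. The naive bound $\sum_{t=1}^T \norm{\enc_k(t, h)}_1 \leq \sum_{t=0}^{k^h - 1} \norm{\enc_k(t, h)}_1 = k^h\cdot h(k-1)/2$ combined with $k^h/T \leq k$ would give $\mse \leq k(k-1)h^3/\epsilon^2$, off by a factor of $k$. The per-digit analysis sketched above is precisely what preserves the correct leading constant $(k-1)$ across all $T$, not only $T$ of the form $k^h - 1$.
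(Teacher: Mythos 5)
Your proposal is correct and follows essentially the same route as the paper's proof sketch: instantiate \cref{alg:nosub}, invoke \cref{lemma:nosub_private} for privacy and \cref{lemma:nosub_mse} for the leading constant, and use the standard amortized $k$-ary counter argument to get $O(kT)$ time and $O(k\log_k T)$ space, which are $O(T)$ and $O(\log T)$ for constant $k$. Your per-digit analysis extending the average-weight computation from $T = k^h - 1$ to arbitrary $T$ addresses a detail the paper's sketch silently elides, and it is indeed the right way to keep the leading constant at $(k-1)$ rather than $k(k-1)$.
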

As \cref{thm:no_subtraction} is not our main contribution, we provide a proof sketch in \cref{appendix:proofT1}.
Before continuing and including subtraction as a technique, we note that \cref{thm:no_subtraction} describes the standard binary tree mechanism for $k=2$, and that the leading constant in the error is minimized for $k=17$, in which case it equals $\approx 0.234$.

\subsection{\texorpdfstring{$k$}{k}-ary trees with subtraction}\label{sec:ksub}

Next, we will investigate what further improvements are possible if we allow for the subtraction of vertices to produce outputs.
The reason for doing so is rather simple: if you need to add $\geq\ceil{k/2}$ left-children for an estimate, you might as well add the parent and subtract the $\leq\floor{k/2}$ right-most children.
Figure \ref{fig: ternary} provides a visual example that further illustrates how incorporating vertex subtraction can effectively reduce the number of vertices required to compute the prefix sum.
Intuitively this should buy us another factor 2 improvement, as instead of adding a maximum of $(k-1)$ vertices per level to produce an estimate, we should now instead combine at most $\ceil{(k-1)/2}$.

To analyse (and efficiently implement) such a mechanism, the previous integer representation using digits in $[0, k-1]$ is no longer the natural choice.
Instead, when we open up the possibility of \emph{subtracting vertices}, the solution becomes to allow for \emph{negative digits}.
\begin{definition}[Offset $k$-ary representation of integers]\label{def:basis_offset}
    Let $k$, $w$ and $t$ be positive integers, with $k$ also being odd, satisfying $t \leq (k^w-1)/2$.
    We then define $\vec{t} = \nenc_k(t, w)\in[-\frac{k-1}{2}, \frac{k-1}{2}]^w$ as the unique vector with integer entries in $[-\frac{k-1}{2}, \frac{k-1}{2}]^w$ satisfying $\sum_{i=1}^{w}k^{i-1}\vec{t}_i = t$.
\end{definition}
We will return to the fact that \cref{def:basis_offset} only considers odd $k$ and first focus on providing intuition for why we are using offset digits.
As in the case for $k$-ary trees using only subtraction, there is a natural connection between $\vec{t}=\nenc_k(t, h)$ and vertices in $\T_{k, h}$.
Put simply: if $\vec{t}_\ell > 0$, then to produce the prefix sum for $t$, we will be adding $\vec{t}_\ell \leq (k-1)/2$ \emph{left-most} children from $\T_{k, h}^\ell$, as in the case without addition except we add fewer vertices in the worst-case.
If $\vec{t}_\ell < 0$, then we will instead subtract $|\vec{t}_\ell| \leq (k-1)/2$ \emph{right-most} children from $\T_{k, h}^\ell$.
In either case, the worst-case number of vertices added or subtracted is reduced to being at most $(k-1)/2$.

Now, we only consider the case of odd $k$ for a few reasons:
(1) if $k$ is odd, then we have symmetry in the digits, meaning we add at worst as many things as we may subtract.
This makes the analysis easier.
(2) it will turn out to be the case that even $k$ is not optimal, see \cref{appendix:even_k} for more details.
(3) if $k$ is odd, then there is a simple relation for when we use the root vertex of a tree or not to produce an output: we only use the root vertex if $t$ is in the right half of the tree.
This gives an intuitive condition on the height to support $T$ inputs: $h = \ceil{\log_k(2T)}$ -- set the height such that there is a sufficient number of left-half leaves.
This can be compared to standard $k$-ary trees, where the same condition is to use every leaf of the tree except the right-most one.

We are now ready to introduce our main contribution: \cref{alg:main_alg_simple}.
\begin{algorithm}[htb]
   \caption{$k$-ary Tree Mechanism with Subtraction}\label{alg:main_alg_simple}

   \begin{algorithmic}[1] %
    \STATE {\bfseries Input:} A stream $\vec{x}_1, \vec{x}_2, \dots, \vec{x}_T$, odd $k\geq 3$, privacy parameter $\epsilon$
    \STATE $h\gets \ceil{\log_k 2T}$
    \STATE Lazily draw $\vec{z}\sim \lap(h / \epsilon)^{(k^h-1)/2}$ 
    \FOR{$t=1$ {\bfseries to} $T$}
        \STATE $noise \gets 0$
        \STATE $\vec{t} \gets \nenc_k(t, h)$
        \STATE $p \gets 0$
        \FOR{$\ell=h$ {\bfseries to} $1$}
            \FOR{$j=1$ {\bfseries to} $|\vec{t}_\ell|$}
                \STATE $p \gets p + \sgn(\vec{t}_\ell)\cdot k^{\ell-1}$
                \STATE $noise \gets noise + \vec{z}_p$
            \ENDFOR
        \ENDFOR
        \STATE At time $t$, output $\sum_{j=1}^{t} \vec{x}_t + noise$
    \ENDFOR
\end{algorithmic}
\end{algorithm}
Structurally the algorithm releases prefix sums with a correlation in the noise that depends on the representation $\nenc_{k}(t, h)$ of each time step $t$.

\section{Proving \texorpdfstring{\cref{thm:main}}{main theorem}}
In this section, we will prove the properties of \cref{alg:main_alg_simple}, as summarized in \cref{thm:main}, starting with privacy.

\subsection{Privacy}
Instead of arguing about the privacy of~\cref{alg:main_alg_simple}, we will argue it for another algorithm with the same output distribution, \cref{alg:main_alg_full}, whose connection to trees is more immediate.
We proceed by establishing the equivalence of the output distributions in Algorithms~\ref{alg:main_alg_simple} and \ref{alg:main_alg_full}.
\begin{algorithm}[htb]
   \caption{More tree-like version of \cref{alg:main_alg_simple} for analysis}\label{alg:main_alg_full}

   \begin{algorithmic}[1] %
    \STATE {\bfseries Input:} $\vec{x}\in \{0, 1\}^T$, $k\geq 3$, privacy parameter $\epsilon$
    \STATE $h\gets \ceil{\log_k 2T}$
    \STATE Construct $k$-ary tree $\T_{k,h}$.
    \STATE For all $I\in\T_{k, h}$ compute $\nSigma(I) = \Sigma(I) + z$ for i.i.d.\ $z\sim\lap(h/\epsilon)$.
    \FOR{$t=1$ {\bfseries to} $T$}
        \STATE $count \gets 0$
        \STATE $\vec{t} \gets \nenc_k(t, h)$
        \STATE $p \gets 0$
        \FOR{$\ell=h$ {\bfseries to} $1$}
            \FOR{$j=1$ {\bfseries to} $|\vec{t}_\ell|$}
                \IF{$\sgn(\vec{t}_\ell) > 0$}
                    \STATE $count \gets count + \widehat{\Sigma}([p + 1, p + k^{\ell-1}])$
                    \STATE $p \gets p + k^{\ell-1}$
                \ELSE
                    \STATE $count \gets count - \widehat{\Sigma}([p - k^{\ell-1} - 1, p])$
                    \STATE $p \gets p - k^{\ell-1}$
                \ENDIF
            \ENDFOR
        \ENDFOR
        \STATE At time $t$, output $count$
    \ENDFOR
    \end{algorithmic}
\end{algorithm}
The following proposition will be helpful in establishing that the outputs are identically distributed.
\begin{proposition}\label{prop:valid_prefix_sums}
    When iterating over $t$, let $count_n$ and $p_n$ be the value of $count$ and $p$ after $1\leq n\leq\norm{\vec{t}}_1$ updates in \cref{alg:main_alg_full}.
    Then $count_n$ stores a noisy prefix sum up to $p_n\in [1, (k^h-1)/2]$.
\end{proposition}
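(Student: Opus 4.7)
The plan is to prove the proposition by induction on $n$ while carefully tracking how $p$ evolves as \cref{alg:main_alg_full} sweeps levels from $h$ down to $1$. I would introduce the ``level-boundary'' values $t^{(\ell)} \coloneqq \sum_{i=\ell+1}^{h} k^{i-1}\vec{t}_i$ for $0 \leq \ell \leq h$, so that $t^{(h)} = 0$ and $t^{(0)} = t$, and observe that $t^{(\ell)}$ is exactly the value of $p$ immediately before level $\ell$ is processed. A short side-claim (from the uniqueness of the offset representation on \cref{def:basis_offset}) is that $t^{(\ell)} \geq 0$, with equality if and only if $\vec{t}_i = 0$ for all $i > \ell$; otherwise the tail contribution $\sum_{i=1}^{\ell}k^{i-1}\vec{t}_i$ would exceed $(k^\ell-1)/2$ in magnitude, contradicting uniqueness for $t \in [1,(k^h-1)/2]$.

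From here I would establish three inductive invariants. (i) Whenever level $\ell$ is being processed, $p_n$ is a multiple of $k^{\ell-1}$, since $t^{(\ell)}$ is a multiple of $k^\ell$ and each in-level step adjusts $p$ by $\pm k^{\ell-1}$; this is what makes the intervals $[p{+}1,p{+}k^{\ell-1}]$ (line~12) and $[p{-}k^{\ell-1}{+}1,p]$ (line~15) genuine vertices of $\T_{k,h}$. (ii) By telescoping the per-step updates and applying the inductive hypothesis, after the $n$th update one has $count_n = \sum_{i=1}^{p_n}\vec{x}_i + Z_n$, where $Z_n$ is a signed sum of the Laplace samples $z$ attached to the vertices used so far; this is precisely the ``noisy prefix sum up to $p_n$'' claim. (iii) The range invariant $p_n \in [1,(k^h-1)/2]$ must be verified at every step.

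The main obstacle is (iii), in particular the lower bound $p_n \geq 1$ when $\vec{t}_\ell < 0$ and $p$ descends. Here I would argue as follows. Within level $\ell$, the minimum $p$-value taken after a step equals $t^{(\ell-1)}$ if $\vec{t}_\ell < 0$, and equals $t^{(\ell)} + k^{\ell-1}$ if $\vec{t}_\ell > 0$. In the latter case $p_n \geq k^{\ell-1} \geq 1$ since $t^{(\ell)} \geq 0$. In the former case, the side-claim forces $t^{(\ell-1)} > 0$ (because $\vec{t}_\ell \neq 0$ means there is a nonzero digit at level $\geq \ell$), and since $t^{(\ell-1)}$ is a nonnegative multiple of $k^{\ell-1}$ it is at least $k^{\ell-1} \geq 1$. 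A boundary remark handles $n=1$: at the highest level $\ell^{\ast}$ with $\vec{t}_{\ell^{\ast}} \neq 0$, the digit must be positive for $t \geq 1$ (otherwise the maximal representable value of $t$ would be negative), so $p_1 = k^{\ell^{\ast}-1} \geq 1$. The upper bound $p_n \leq (k^h-1)/2$ is the easier half: the maximum of $p$ during level $\ell$ is either $t^{(\ell-1)}$ or $t^{(\ell)} - k^{\ell-1}$, and in both cases the bound $|t^{(\ell')}| \leq (k^h-k^{\ell'})/2$ (obtained by summing the maximal digit magnitude $(k-1)/2$ over the remaining levels) gives what is needed.
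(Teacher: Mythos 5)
Your proposal is correct and follows essentially the same route as the paper's proof: induction on $n$, using the divisibility of $p$ by $k^{\ell-1}$ to certify that lines~12 and~15 reference genuine tree vertices, and the positivity of the leading nonzero digit together with the maximal representable value $(k^h-1)/2$ to keep $p_n$ in range. Your treatment is more explicit than the paper's — in particular the argument via the level-boundary values $t^{(\ell)}$ that $p_n \geq 1$ is maintained throughout a descending (subtraction) level, which the paper compresses into the remark that the first nonzero digit of $\vec{t}$ is positive — but the underlying ideas coincide.
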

\begin{proof}
    We show this by induction on $n$.
    Observe that $p_n\in [1, (k^h-1) / 2]$ for all valid $n$ since the first non-zero digit in $\vec{t}$ has to be positive, and since the greatest possible integer we can represent in these digits is $(k^h-1) / 2$.
    We therefore have that $p_1 \geq 1$ and $count_1 = \nSigma[1, p]$ is a noisy prefix sum up to $\vec{x}_p$.
    Assuming that after $n$ updates $count$ represents a noisy prefix sum up to $p$, we will argue that this still holds after the $n+1$\textsuperscript{st} update to $p$.
    Just before $count$ is updated, observe that $p=p_n\overset{\pmod{k^{\ell-1}}}{=} 0$, thus implying that the corresponding partial sums added or subtracted on lines 12 and 13 exist in the tree, and that the result will be a noisy prefix sum up to $p_{n+1}$.
\end{proof}
We can now prove that the two algorithms produce identically distributed outputs.
\begin{proposition}\label{prop:algs_equivalent}
    The output distribution of Algorithms~\ref{alg:main_alg_simple} and \cref{alg:main_alg_full} are equal.
\end{proposition}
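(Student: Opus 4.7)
My plan is to show that the two algorithms agree on both the signal and the noise: the signal pointwise, and the noise in joint distribution via a coupling. Since \cref{alg:main_alg_simple} outputs $\sum_{j=1}^t \vec{x}_j + noise$ explicitly, and \cref{prop:valid_prefix_sums} shows that the deterministic part of $count$ in \cref{alg:main_alg_full} at time $t$ also equals $\sum_{j=1}^t \vec{x}_j$, it suffices to show that the two noise vectors (jointly indexed by $t \in [1, T]$) are equal in distribution.

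The central step will be to construct a canonical injection $p \mapsto I(p)$ from positions in $[1,(k^h-1)/2]$ into tree nodes of $\T_{k,h}$, together with a sign $\sigma(p) \in \{\pm 1\}$, such that whenever the computation of some $t$ visits position $p$ in \cref{alg:main_alg_simple}, the corresponding step in \cref{alg:main_alg_full} uses the tree node $I(p)$ with sign $\sigma(p)$. I plan to derive this from three observations: (i)~positions visited at level $\ell$ are precisely those $p$ with $v_k(p) = \ell - 1$ (where $v_k$ is the $k$-adic valuation), so the level $\ell(p) := v_k(p)+1$ depends only on $p$; (ii)~writing $p = c\, k^{\ell(p)-1}$ with $k \nmid c$, the unique integer $j \in \{-(k-1)/2,\ldots,(k-1)/2\} \setminus \{0\}$ with $j \equiv c \pmod k$ determines both the sign $\sigma(p) := \sgn(j)$ and the preceding position $p_0 = p - j\, k^{\ell(p)-1}$; and (iii)~the tree node $I(p)$ used at that step---namely $[p - k^{\ell(p)-1}+1,\, p]$ if $\sigma(p) = +1$, and $[p+1,\, p+k^{\ell(p)-1}]$ otherwise---is therefore a function of $p$ alone. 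A short case analysis on the sign combinations then shows that $p \mapsto I(p)$ is injective.

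With the bijection in hand, I would couple the two algorithms by defining $\vec{z}_p := \sigma(p)\, z_{I(p)}$, where $\{z_I\}_{I \in \T_{k,h}}$ are the i.i.d.\ Laplace noises of \cref{alg:main_alg_full}; noises for tree nodes outside the image of $I$ are independent of all outputs and can be ignored. By symmetry of the Laplace distribution and independence across distinct $I(p)$, the vector $\vec{z}$ is itself i.i.d.\ $\lap(h/\epsilon)$, matching the distribution in \cref{alg:main_alg_simple}. Under this coupling the noise increments agree step by step: \cref{alg:main_alg_simple} adds $\vec{z}_p$, while \cref{alg:main_alg_full} adds $\sigma(p)\, z_{I(p)}$ (as a $+$ or $-$ depending on $\sgn(\vec{t}_\ell)$), and these are equal by construction. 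Hence the outputs coincide pointwise under the coupling, so their distributions coincide.

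The main obstacle I anticipate is proving that the triple $(\ell(p), \sigma(p), I(p))$ really is a function of $p$ alone---equivalently, that any two values of $t$ whose computations visit the same position $p$ must do so at the same level, from the same preceding position $p_0$, and with the same sign. This reduces to the uniqueness of the decomposition $p = p_0 + j\, k^{\ell-1}$ with $p_0$ a multiple of $k^\ell$ and $j$ in the offset-digit range $\{-(k-1)/2,\ldots,(k-1)/2\}$, which is the same uniqueness property underlying the definition of $\nenc_k$. Injectivity of $p \mapsto I(p)$ will then follow from a short endpoint-matching argument (two distinct positions with $\sigma = +1$ give nodes with distinct right endpoints, two with $\sigma = -1$ give distinct left endpoints, and the mixed case is ruled out by tracking $c \bmod k$), and the signal equality is already provided by \cref{prop:valid_prefix_sums}.
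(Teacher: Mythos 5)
Your proposal is correct and takes essentially the same route as the paper's proof: both identify each noise term by the position $p$ reached after the pointer update, argue that the associated tree vertex and its sign are determined by $p$ alone (via the uniqueness of the offset $k$-ary decomposition), and invoke the symmetry of the Laplace distribution to absorb the sign. Your write-up simply makes the coupling explicit --- the $k$-adic valuation, the injection $p \mapsto I(p)$, and the resulting i.i.d.\ noise vector --- where the paper leaves these steps informal.
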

\begin{proof}%
    First, observe that \cref{alg:main_alg_full} like \cref{alg:main_alg_simple} outputs the correct prefix sum on expectation which follows from \cref{prop:valid_prefix_sums} for $n=\norm{\vec{t}}_1$.
    We thus only have to argue for the equivalence of the noise added.
    With this target in mind, note that each time a vertex is used on lines~12 and 15 in \cref{alg:main_alg_full}, it can be uniquely identified by the $p$ after it is updated on the next line.
    The uniqueness can be inferred from the uniqueness of expressing the corresponding $\vec{p}=\nenc_k(p, h)$, where $p\in[1, (k^h-1)/2]$ can be interpreted as a time step.
    Following up, note that for any time step $t$, the sequence of $p$'s that are computed on line~10 for~\cref{alg:main_alg_simple} are the same as the sequence of $p$'s obtained after lines~13 and 16 for~\cref{alg:main_alg_full}.
    Since \cref{alg:main_alg_simple} explicitly indexes the noise terms by the same $p$, and since each term is identically distributed across the algorithm, it follows that each individual noise term that is added to the output at each step is identical up to their sign.
    
    What remains to account for is the lack of a factor \enquote{$\sgn(\vec{e}_\ell)$} on line~11 in~\cref{alg:main_alg_simple}.
    The reason why it is not needed comes from the fact that a given vertex in~\cref{alg:main_alg_full} always contributes with the same sign, that we always subtract right-most children and always add left-most children to produce prefix sum estimates.
    Since Laplace noise is symmetric, subtracting or adding noise has the same effect, and we can therefore drop the sign on line~11 in~\cref{alg:main_alg_simple} and still have the same output distribution, proving the statement.
\end{proof}
We are now ready to prove that \cref{alg:main_alg_simple} satisfies $\epsilon$-DP.
\begin{lemma}\label{lemma:main_private}
    \cref{alg:main_alg_simple} satisfies $\epsilon$-DP.
\end{lemma}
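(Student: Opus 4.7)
The plan is to leverage \cref{prop:algs_equivalent}: since \cref{alg:main_alg_simple} has the same output distribution as \cref{alg:main_alg_full}, it suffices to establish $\epsilon$-DP for \cref{alg:main_alg_full}, where noise has been attached directly to the vertices of the $k$-ary tree $\T_{k,h}$ and standard tree-based arguments become available. This mirrors the structure of \cref{lemma:nosub_private}, the only new ingredient being that we must accommodate the offset representation $\nenc_k$ rather than $\enc_k$.

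The first step is to identify exactly which noisy vertex values enter the output. Because $\nenc_k(t, h)$ has exactly $h$ entries and the inner loop of \cref{alg:main_alg_full} ranges only over $\ell \in [h, 1]$, the algorithm never accesses a vertex on level $h+1$; in particular the root of $\T_{k,h}$ is never used. Hence every $count$ produced on line~20 is a deterministic function of $t$ and the noisy subtree sums $\{\nSigma(I) : I \in \bigcup_{\ell=1}^h \T_{k, h}^\ell\}$, and the whole output sequence is a post-processing of those noisy counts.

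The second step is to bound the $\ell_1$-sensitivity of the corresponding vector of true subtree sums. Given neighboring inputs $\vec{x} \sim \vec{x}'$ differing at a single index $i$, by \cref{def:trees} each level $\ell \in [1, h]$ contains exactly one interval $I \in \T_{k,h}^\ell$ with $i \in I$, and for that interval $|\Sigma(I) - \Sigma'(I)| = 1$, while $\Sigma$ agrees with $\Sigma'$ on every other interval. Summing over the $h$ affected vertices gives $\ell_1$-sensitivity exactly $h$. The standard Laplace mechanism then implies that adding i.i.d.\ $\lap(h/\epsilon)$ noise to each of these true sums yields $\epsilon$-DP, and closure under post-processing transfers this guarantee to the outputs of \cref{alg:main_alg_full}, and hence of \cref{alg:main_alg_simple}.

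The main obstacle is essentially bookkeeping: we must be certain that the signed-digit representation never silently forces us to access the root vertex (which would push $\Delta_1$ to $h+1$ and break the claim). This is exactly what the choice $h = \ceil{\log_k 2T}$ secures, since it guarantees $T \leq (k^h-1)/2$, so every relevant time step $t$ admits a valid $h$-digit representation under \cref{def:basis_offset}. Once this is checked, everything else reduces to the sensitivity calculation above and the post-processing property of differential privacy.
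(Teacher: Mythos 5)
Your proposal is correct and follows essentially the same route as the paper: reduce to \cref{alg:main_alg_full} via \cref{prop:algs_equivalent}, observe that the root at level $h+1$ is never accessed so each input index lies in exactly $h$ used intervals, conclude $\Delta_1 = h$, and finish with the Laplace mechanism plus post-processing. The extra checks you include (the loop only reaching level $h$, and $h = \ceil{\log_k 2T}$ guaranteeing $T \leq (k^h-1)/2$) are correct and only make explicit what the paper states more tersely.
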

\begin{proof}%
    We proceed by proving the privacy of \cref{alg:main_alg_full}, the privacy of which implies the privacy of \cref{alg:main_alg_simple} via \cref{prop:algs_equivalent}.
    The privacy proof is standard and follows from identifying that if all $\nSigma(I)$ on line~4 are released privately, then the remaining computations are post-processing of that release.
    For neighboring inputs $\vec{x}'\sim \vec{x}$, let $i$ be the index at which they differ.
    Let $\vec{y}\in\R^{|\T_{k, h}|}$ be a vector where each entry is identified by a unique interval from $I \in\T_{k, h}$ equal to $\Sigma(I)$ given $\vec{x}$.
    Define $\vec{y}'$ analogously for $\vec{x}'$.
    Since any $i\in [1, T]$ is included in exactly $h$ intervals $I\in \T$, intentionally excluding the root vertex which is not used, it follows that the sensitivity for releasing $\vec{y}$ is $\Delta_1 = \norm{\vec{y} - \vec{y}'}_1 = h$.
    As $\vec{y}$ is released on line~4 using the Laplace mechanism with appropriate scale, and all computations afterward are post-processing of $y$, \cref{alg:main_alg_full} satisfies $\epsilon$-DP.
\end{proof}

\subsection{Utility}
As intuited, allowing for the subtraction of vertices does indeed improve utility.
Before stating the improvement, note that, analogous to the case without subtraction, the variance of the output from \cref{alg:main_alg_simple} at time $t$ will be equal to $(2h^2/\epsilon^2)\cdot\norm{\nenc_k(t, h)}_1$, since line~11 is executed $\norm{\nenc_k(t, h)}_1$ times.

\begin{lemma}\label{lemma:main_mse}
    \cref{alg:main_alg_simple} achieves a mean squared error of $\frac{k(1-1/k^2)h^3}{2\epsilon^2(1 - 1/k^h)}$ over $T=(k^h - 1)/2$ outputs.
\end{lemma}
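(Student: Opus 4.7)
The plan is to mirror the proof of \cref{lemma:nosub_mse}, replacing the standard $k$-ary digit analysis with the offset representation of \cref{def:basis_offset}. Since \cref{alg:main_alg_simple} adds $\norm{\nenc_k(t,h)}_1$ independent noise terms of variance $2h^2/\epsilon^2$ to produce the output at time $t$, the MSE equals the average number of vertices used, call it $a_{[1,T]} = \frac{1}{T}\sum_{t=1}^T \norm{\nenc_k(t,h)}_1$, multiplied by $2h^2/\epsilon^2$. So the entire content of the lemma reduces to computing $a_{[1,T]}$.

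First I would observe that as $t$ ranges over all integers in $[-(k^h-1)/2, (k^h-1)/2]$, the map $t\mapsto\nenc_k(t,h)$ is a bijection onto $\{-\tfrac{k-1}{2},\dots,\tfrac{k-1}{2}\}^h$. Hence over this domain each digit is independent and uniform on $\{-\tfrac{k-1}{2},\dots,\tfrac{k-1}{2}\}$. By linearity of expectation the average $\ell_1$-weight is
\begin{equation*}
a_{[-T,T]} \;=\; h \cdot \frac{1}{k}\sum_{d=-(k-1)/2}^{(k-1)/2} |d| \;=\; \frac{2h}{k}\sum_{d=1}^{(k-1)/2} d \;=\; \frac{h(k^2-1)}{4k} \;=\; \frac{hk(1-1/k^2)}{4}.
\end{equation*}

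Next I would transfer this to the one-sided average $a_{[1,T]}$. Since $\norm{\nenc_k(-t,h)}_1 = \norm{\nenc_k(t,h)}_1$ (the representation of $-t$ is obtained by negating every digit) and $\norm{\nenc_k(0,h)}_1 = 0$, the total weight is split symmetrically, giving $\sum_{t=1}^T \norm{\nenc_k(t,h)}_1 = \tfrac{1}{2}\cdot k^h \cdot a_{[-T,T]}$. Dividing by $T=(k^h-1)/2$ and using $k^h/(k^h-1) = 1/(1-1/k^h)$ yields
\begin{equation*}
a_{[1,T]} \;=\; \frac{k^h}{k^h-1}\cdot a_{[-T,T]} \;=\; \frac{hk(1-1/k^2)}{4(1-1/k^h)}.
\end{equation*}

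Finally, multiplying by $2h^2/\epsilon^2$ gives the claimed MSE of $\frac{k(1-1/k^2)h^3}{2\epsilon^2(1-1/k^h)}$. The only subtle point, which I would state explicitly, is the symmetry argument for pulling the sum from $[-T,T]$ down to $[1,T]$ (plus verifying the off-by-one from the $t=0$ term); the digit-level independence under the uniform distribution makes the computation of $a_{[-T,T]}$ immediate, and no counting of representations is needed beyond the bijection. There is no real obstacle — the whole proof is two elementary identities and an appeal to the already-established fact that each output's variance is $(2h^2/\epsilon^2)\cdot\norm{\nenc_k(t,h)}_1$, which in turn follows from the noise-indexing argument used to establish \cref{prop:algs_equivalent}.
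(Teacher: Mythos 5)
Your proposal is correct and follows essentially the same route as the paper's proof: both reduce the MSE to the average $\ell_1$-weight $a_{[1,T]}$ of the offset representations times the per-vertex variance $2h^2/\epsilon^2$, compute the average over the symmetric range $[-T,T]$ via digit-uniformity to get $\frac{h(k^2-1)}{4k}$, and transfer back to $[1,T]$ using the sign-flip bijection and the vanishing $t=0$ term. The arithmetic checks out and matches the stated bound exactly.
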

\begin{proof}
    First note that the mean squared error over $T$ steps is equal to the average number of noise terms added together to produce an output, call this number $a_{[1, T]}$, times the variance of the noise in each vertex.
    We thus want to compute $a_{[1, T]} = \frac{1}{T}\sum_{t=1}^{T} \norm{\nenc_k(t, h)}_1$.
    Note that $T$ is the greatest positive integer that can be expressed using $h$ digits in $\{-\frac{k-1}{2},\dots, \frac{k-1}{2}\}$, and $-T$ the most negative integer.
    For any representation $\vec{t}=\nenc_{k}(t, h)$ where $t\in[1, T]$, note that we have a bijection where $-\vec{t}$ encodes a unique negative integer $\geq -T$.
    We therefore have that $a_{[1, T]} = a_{[-T, -1]}$, and $a_{[-T, T]} = (1-1/k^h)\cdot a_{[1, T]}$, where the extra factor comes from not including $0$.
    Since $a_{[-T, T]} = h\cdot\frac{1}{k}\sum_{d=-(k-1)/2}^{(k-1)/2} |d| = \frac{h(k^2-1)}{4k}$, by noting that every digit occurs the same amount of times in each position, we also have that $a_{[1, T]} = \frac{kh}{4}\frac{1-1/k^2}{1-1/k^h}$.
    Since $\var [\vec{z}_p] = 2h^2/\epsilon^2$, we arrive at a mean squared error of $\frac{kh}{4}\frac{1-1/k^2}{1-1/k^h}\cdot\frac{2h^2}{\epsilon^2} = \frac{kh^3(1-1/k^2)}{2\epsilon^2(1 - 1/k^h)}$.
\end{proof}
We follow up with a corollary that states how the error scales and what constant the optimal $k$ gives.
\begin{corollary}\label{cor:best_k_scaling}
    \cref{alg:main_alg_simple} achieves a mean squared error that scales as $\frac{k(1-1/k^2)}{2\epsilon^2\log(k)^3}\log(T)^3 + o(\log(T)^3)$, and which is minimized for $k=19$ yielding $(0.1236/\epsilon^2)\log(T)^3 + o(\log(T)^3)$.
\end{corollary}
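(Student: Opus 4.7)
The corollary is an asymptotic refinement of \cref{lemma:main_mse} together with a numerical minimization, so the plan is to first extract the leading term from the exact formula and then search over odd $k$. Starting from $\mathrm{MSE} = \frac{k(1-1/k^2)h^3}{2\epsilon^2(1-1/k^h)}$ with $h = \lceil \log_k(2T)\rceil$, I would write $h = \log(T)/\log(k) + O(1)$, cube this to obtain $h^3 = \log(T)^3/\log(k)^3 + O(\log(T)^2)$, and observe that the $O(\log(T)^2)$ correction is absorbed into the $o(\log(T)^3)$ error term. Simultaneously, $1/(1-1/k^h) = 1 + O(k^{-h}) = 1 + O(1/T)$ contributes only a $(1+o(1))$ multiplicative factor. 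Multiplying through yields the stated leading term $\frac{k(1-1/k^2)}{2\epsilon^2\log(k)^3}\log(T)^3$.

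For the minimization step, define $g(k) = k(1-1/k^2)/\log(k)^3 = (k^2-1)/(k\log(k)^3)$ and note that $g$ is continuous in $k$, decreasing for small $k$ (dominated by the $1/\log(k)^3$ factor) and increasing for large $k$ (dominated by the linear numerator). I would differentiate the continuous extension to locate the real minimum near $k\approx 19$, then verify by direct evaluation that among odd integers $k\in\{3,5,\dots,25\}$ the value $g(19)\approx 0.2472$ is the smallest, so the leading MSE constant $g(19)/2\approx 0.1236$ is the one stated.

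The subtle step, and the main obstacle, is handling $T$ that is not of the exact form $(k^h-1)/2$ for which \cref{lemma:main_mse} is stated. For general $T\ge 2$ the algorithm sets $h = \lceil\log_k(2T)\rceil$, and the MSE is $\frac{2h^2}{\epsilon^2 T}\sum_{t=1}^{T}\norm{\nenc_k(t,h)}_1$. A naive bound by enlarging the sum to $t\in[1,(k^h-1)/2]$ loses a factor up to $k$, which would spoil the constant. My plan is therefore a digit-by-digit average: writing $\vec{T} = \nenc_k(T,h)$ and decomposing $[1,T]$ greedily into maximal sub-intervals on which a prefix of the digits of $t$ is frozen and the remaining digits range uniformly over $\{-\tfrac{k-1}{2},\dots,\tfrac{k-1}{2}\}^\ell$, one can evaluate the average $\ell_1$ weight on each sub-interval in closed form using the computation already performed in \cref{lemma:main_mse}. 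Summing the contributions, the leading-order term is $\frac{kh(1-1/k^2)}{4}(1+o(1))$ regardless of where $T$ sits between consecutive powers of $k$, so the prefactor $h^3$ still governs, and the oscillation is absorbed into the $o(\log(T)^3)$ term. Plugging this uniform average back into the variance bound completes the asymptotic estimate; combining with the minimization in the previous paragraph yields the corollary.
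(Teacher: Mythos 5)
Your proposal is correct and follows essentially the same route as the paper, which states the corollary as an immediate consequence of \cref{lemma:main_mse}: substitute $h=\lceil\log_k(2T)\rceil=\log(T)/\log(k)+O(1)$, absorb the lower-order terms, and minimize $(k^2-1)/(k\log(k)^3)$ over odd integers, where $k=19$ indeed beats $k=21$ ($0.24718$ vs.\ $0.24726$) and gives the constant $0.1236/\epsilon^2$. Your extra care for general $T$ not of the form $(k^h-1)/2$ (via the digit-by-digit averaging over a contiguous interval of length $T$) is a valid and welcome refinement of a point the paper leaves implicit.
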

In relation to Honaker Online, this is a constant improvement by more than a factor 4.

\subsection{Space and time complexity}

Finally, it turns out that extending to $k$-ary trees with subtraction does not impact the efficiency of tree-based aggregation.
We will need the following proposition to prove both the space and time complexity.
\begin{proposition}\label{prop:p_used_on_interval}
    Each noise term $\vec{z}_p$ used by $\cref{alg:main_alg_simple}$ will only be used for outputs in a time interval $[t_1, t_2]\subseteq[1, T]$.
\end{proposition}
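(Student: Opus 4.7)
The plan is to prove Proposition \ref{prop:p_used_on_interval} by passing through the tree-based formulation in \cref{alg:main_alg_full} via \cref{prop:algs_equivalent}. That proposition establishes that every index $p \in [1, (k^h-1)/2]$ used by \cref{alg:main_alg_simple} uniquely identifies the vertex $v \in \T_{k,h}$ whose noisy subtree sum it represents in \cref{alg:main_alg_full}. Hence it suffices to prove that for every such vertex $v$, the set $S_v := \{t \in [1,T] : v \text{ is used in the computation at time } t\}$ is a contiguous interval.

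I would begin by parametrizing $v$ explicitly. For $v \in \T_{k,h}^\ell$ at position $j \in [1, k^{h-\ell+1}]$, write $j = (j' - 1)k + c$ with $c \in [1, k]$, so that $v$ is the $c$-th child of its parent at level $\ell + 1$ (whose index is $j'$) and covers the interval $[(j'-1)k^\ell + (c-1)k^{\ell-1} + 1, \; (j'-1)k^\ell + c k^{\ell-1}]$.

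Next, I would characterize $S_v$ in terms of the digits of $\vec{t} = \nenc_k(t, h)$; this is the main bookkeeping step. After processing levels $h, h-1, \dots, \ell+1$ in \cref{alg:main_alg_full}, the intermediate value of $p$ equals $p^* := \sum_{i=\ell+1}^h \vec{t}_i k^{i-1}$, which is automatically a multiple of $k^\ell$. Inspecting lines 11--16 and using the range constraint $|\vec{t}_\ell| \le (k-1)/2$, a short divisibility argument shows that $v$ is used at time $t$ in exactly one of two mutually exclusive situations: (i) $c \le (k-1)/2$, $p^* = (j' - 1)k^\ell$, and $\vec{t}_\ell \ge c$, in which case $v$ is added as one of the leftmost children of its parent; or (ii) $c \ge (k+3)/2$, $p^* = j' k^\ell$, and $\vec{t}_\ell \le c - k - 1$, in which case $v$ is subtracted as one of the rightmost children. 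A middle child ($c = (k+1)/2$) yields no feasible $\vec{t}_\ell$ in either case, which is consistent with the fact that the $(k^h-1)/2$ indices of $\vec{z}$ cover precisely the non-middle, reachable vertices.

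Finally, I would read off the interval from the characterization. In either scenario the upper digits $\vec{t}_{\ell+1}, \dots, \vec{t}_h$ are pinned by the constraint on $p^*$, while $\vec{t}_\ell$ is free over an integer interval and $\vec{t}_{\ell-1}, \dots, \vec{t}_1$ range freely over $[-(k-1)/2, (k-1)/2]^{\ell-1}$. For a fixed value of $\vec{t}_\ell$, the lower digits realize every integer in a length-$k^{\ell-1}$ interval centered at $\vec{t}_\ell k^{\ell-1}$, and incrementing $\vec{t}_\ell$ by one shifts this interval by exactly $k^{\ell-1}$, so consecutive blocks abut perfectly; their union over the admissible range of $\vec{t}_\ell$ is therefore a single contiguous integer interval, and adding the fixed offset $p^*$ yields $S_v = [t_1, t_2] \subseteq [1, T]$. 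The main obstacle is the case analysis that extracts the clean conditions (i) and (ii) from the algorithm's iteration, in particular arguing that the divisibility constraint on $p^*$ forces a unique compatible value of $\vec{t}_{\ell+1}, \dots, \vec{t}_h$; once this is in place, the tiling argument on the $k$-ary offset representation is essentially routine.
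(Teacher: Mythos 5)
Your proposal is correct, and it is the same idea at heart as the paper's proof --- analyzing, via the offset $k$-ary digits of $t$, which time steps use a given noise term --- but the execution differs in a way worth noting. The paper works directly with $\vec{p}=\nenc_k(p,h)$ in \cref{alg:main_alg_simple}: it observes that $\vec{z}_p$ added at level $q$ can only be used at times $t$ whose $h-q$ most significant digits match those of $\vec{p}$, and that the integer formed by those top digits is monotone in $t$, so the set of matching $t$ is an interval. That is a shorter argument, but it only exhibits a necessary condition and hence containment of the usage set in an interval. You instead pass to \cref{alg:main_alg_full} via \cref{prop:algs_equivalent}, derive the exact (necessary and sufficient) condition for a vertex to be added or subtracted at time $t$ --- your cases (i) and (ii), with the middle child correctly excluded --- and then show by the abutting-blocks tiling of the lower digits that the usage set is itself a contiguous interval. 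This is more work, but it proves the stronger contiguity property that \cref{lemma:log_space} actually relies on (a noise term, once dropped, is never needed again), whereas the paper leaves the step from ``contained in an interval of matching-prefix times'' to ``used on a contiguous block'' implicit. Your characterization also makes explicit that each noise term is used at exactly one level, which the paper's proof assumes tacitly when it fixes $q$. In short: same digit-based strategy, but your version closes a small gap in rigor at the cost of a longer case analysis.
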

\begin{proof}
    Let the value of the iterate $\ell$ be $q$ when $\vec{z}_p$ is added on line~11.
    Consider first the case when $q<h$.
    Note that for any time step $t$ with representation $\vec{t} = \nenc_k(t, h)$, $\vec{z}_p$ is only used if the $h-q$ most-significant digits are identical across $\vec{p}=\nenc_k(p, h)$ and $\vec{t}$.
    Interpreting any fixed number of most significant digits of $\nenc_k(t, h)$ as an integer, as $t$ is incremented, will produce a sequence of monotonically increasing integers, implying that any noise value $\vec{z}_p$ will start to get used at some $t_1$, and then after some time step $t_2\geq t_1$ will never be used again, proving the claim for $q < h$.
    If $q=h$, then once $\vec{z}_p$ is added for the first time at $t_1$, it will be added at every future time step including $t_2 = T$, proving the claim fully.
\end{proof}
\begin{lemma}\label{lemma:log_space}
    \cref{alg:main_alg_simple} runs in $O(k\log_k T)$ space.
\end{lemma}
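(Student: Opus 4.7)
The plan is to implement \cref{alg:main_alg_simple} with lazy sampling of the noise vector $\vec{z}$ together with eager disposal, and then to argue that at every time step the number of live coordinates is at most $\|\nenc_k(t, h)\|_1 + O(1) = O(k \log_k T)$.

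First, because $\vec{z}$ is i.i.d.\ Laplace, each coordinate $\vec{z}_p$ can be drawn independently the first time it is accessed on line~11 and kept in a hash table keyed by $p$. By \cref{prop:p_used_on_interval}, each coordinate is used on a contiguous interval $[t_1(p), t_2(p)]$, so once $t > t_2(p)$ it is safe to free $\vec{z}_p$. Both endpoints are computable from $\nenc_k(p, h)$ in $O(h) = O(\log_k T)$ time, which suffices for online disposal.

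The key observation, which I would establish by sharpening the argument of \cref{prop:p_used_on_interval}, is that between $t_1(p)$ and $t_2(p)$ the noise term $\vec{z}_p$ is actually accessed at \emph{every} intermediate time step, so live coordinates and currently-accessed coordinates coincide. Concretely, if $p$ has offset representation $(0, \dots, 0, d_\ell, d_{\ell+1}, \dots, d_h)$ with $d_\ell \neq 0$, then $\vec{z}_p$ is accessed at time $t$ exactly when the top $h - \ell$ digits of $\nenc_k(t, h)$ equal $(d_{\ell+1}, \dots, d_h)$ and the level-$\ell$ digit $\vec{t}_\ell$ has the same sign as $d_\ell$ with $|\vec{t}_\ell| \geq |d_\ell|$. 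Inside a fixed high-digit context, $\vec{t}_\ell$ is monotone non-decreasing in $t$ (each value in $\{-(k-1)/2, \dots, (k-1)/2\}$ appears in a contiguous block of $k^{\ell-1}$ consecutive $t$'s), so the set of accessing times is a contiguous suffix (or prefix) of the context's time range, with no gaps.

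Given this, at any time $t$ the number of stored coordinates equals the number of coordinates accessed at $t$, which is exactly $\sum_{\ell=1}^{h} |\vec{t}_\ell| = \|\nenc_k(t, h)\|_1$ since the inner loop on lines 9--11 runs $|\vec{t}_\ell|$ times at level $\ell$ and each pass accesses a distinct $p$. Using $|\vec{t}_\ell| \leq (k-1)/2$ and $h = \lceil \log_k 2T \rceil$ gives $\|\nenc_k(t, h)\|_1 \leq h(k-1)/2 = O(k \log_k T)$, and the $O(h)$ extra words for the loop variables and for storing $\nenc_k(t, h)$ do not change the asymptotic bound.

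The main obstacle is the strengthening of \cref{prop:p_used_on_interval} from contiguity of $[t_1(p), t_2(p)]$ to the stronger no-idle-time property, which hinges on the monotonicity of $\vec{t}_\ell$ inside a fixed high-digit context; once this is in place the counting is routine.
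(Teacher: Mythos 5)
Your proposal is correct and follows essentially the same route as the paper: generate each $\vec{z}_p$ lazily on first access, discard it once it stops being used, and bound the number of live noise terms at time $t$ by $\norm{\nenc_k(t,h)}_1 \leq h(k-1)/2 = O(k\log_k T)$. The one genuine difference is that you explicitly prove the ``no idle time'' strengthening of \cref{prop:p_used_on_interval} (that $\vec{z}_p$ is accessed at \emph{every} step of $[t_1(p),t_2(p)]$, via the monotonicity of $\vec{t}_\ell$ within a fixed high-digit context), whereas the paper's proof silently relies on this when it infers from ``not used at $t'$'' that $\vec{z}_p$ is never used again; your characterization of the access condition (matching high digits, matching sign, $|\vec{t}_\ell|\geq|d_\ell|$) is accurate and makes that inference airtight. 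This is a sharpening of rigor rather than a different argument, and your counting of the $O(h)$ bookkeeping overhead is a harmless addition.
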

\begin{proof}
    Note that at any time step $t$, we need to store noise values used for the current time step $t$, of which at most $h(k-1)/2$ are at most needed.
    Let $\vec{z}_p$ be an arbitrary noise value kept in memory at time $t$, and then no longer used starting from $t' \geq t$.
    By \cref{prop:p_used_on_interval}, $\vec{z}_p$ will never be used again after $t'$, and it can therefore be removed from memory.
    It follows that no noise values besides those used at a current time step need to be stored in memory, thus we need to store at most $h(k-1)/2 = O(k\log_k T)$ noise values.
\end{proof}

\cref{lemma:linear_time} states the time complexity of \cref{alg:main_alg_simple}, which assumes that (1) Laplace noise can be generated, and (2) $\nenc_k(t, h)$ can be computed, both in $O(k)$ time.
Note that the cost of (1) is unavoidable, and that (2) can be efficiently computed by incrementing the digits at each step, where an argument analogous to the analysis for incrementing a binary counter implies a time of $O(kT)$ for incrementing the counter $T$ times as measured in digit flips.
\begin{lemma}\label{lemma:linear_time}
    \cref{alg:main_alg_simple} can be implemented to run in $O(kT)$ time.
\end{lemma}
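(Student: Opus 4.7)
The plan is to implement Algorithm 2 so that the total work across all $T$ time steps is $O(kT)$, rather than the $\Theta(kT\log_k T)$ that follows from naively recomputing each output from scratch (since by Lemma 4.6 the average per-step sum involves $\Theta(k\log_k T)$ noise terms). The three quantities that must be maintained incrementally are the exact prefix sum, the signed counter $\vec{t}$, and the running noise sum.

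First, I would maintain $S_t = S_{t-1} + \vec{x}_t$ in $O(1)$ per step, contributing $O(T)$ in total. Next, I would maintain $\vec{t} = \nenc_k(t, h)$ as a signed balanced base-$k$ counter that is incremented by $1$ at each step: increment the least-significant digit, and whenever a digit would exceed $(k-1)/2$ reset it to $-(k-1)/2$ and add $1$ to the next one. Since digit $\ell$ flips at most $T/k^{\ell-1}$ times, the total number of digit-change events over all $T$ increments is $\sum_{\ell=1}^{h} T/k^{\ell-1} = O(T)$, and, as the paragraph preceding this lemma notes, this bounds the overall counter-maintenance cost by $O(kT)$.

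Third and most delicately, I would maintain the running noise sum $N_t$ incrementally: after updating $\vec{t}$, I add or subtract only those $\vec{z}_p$ that entered or left the active set of noise terms between step $t-1$ and step $t$. When a digit changes by $\pm 1$ within the same sign, this costs $O(1)$ noise updates; when a carry flips a digit from $(k-1)/2$ to $-(k-1)/2$ at level $\ell$, the $(k-1)/2$ left-children previously added are replaced by the $(k-1)/2$ right-children of the newly shifted parent (now subtracted), costing up to $k-1$ updates at that level. By \cref{prop:p_used_on_interval} each of the $(k^h-1)/2 = O(T)$ entries $\vec{z}_p$ joins $N$ at most once and leaves it at most once across the whole run, so the cumulative noise-sum bookkeeping is $O(T)$. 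Laplace samples are drawn lazily on first use at cost $O(k)$ per sample by assumption, and there are $O(T)$ samples, giving $O(kT)$ generation time. Outputting $S_t + N_t$ is $O(1)$ per step, and summing all contributions yields $O(kT)$ in total.

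The main obstacle is justifying the noise-sum maintenance: in isolation a cascade at level $\ell$ can force up to $k-1$ swaps at that single level, so a worst-case step appears to do $\Theta(kh)$ work. The telescoping provided by \cref{prop:p_used_on_interval} -- that each noise term is used over a single contiguous interval of time steps -- is precisely what prevents these swaps from accumulating beyond $O(T)$ total, and it is this amortization that closes the gap between the naive $\Theta(kT\log_k T)$ bound and the claimed $O(kT)$.
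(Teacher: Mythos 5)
Your proposal is correct and follows essentially the same route as the paper's proof: maintain the running noise sum incrementally and amortize via \cref{prop:p_used_on_interval}, so that each of the noise terms is added once and removed at most once, with the counter and noise generation handled by the stated $O(k)$-per-step assumptions. One small imprecision: since $h=\ceil{\log_k(2T)}$ one only gets $(k^h-1)/2 < kT$, i.e.\ $O(kT)$ rather than $O(T)$ noise terms (the paper handles this by first analyzing $T'=(k^h-1)/2$ and then noting $T'/T\leq k$), but this does not affect the final $O(kT)$ bound.
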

\begin{proof}
   First let $T=(k^h-1)/2$, meaning we support as many inputs as possible using $h$ digits.
   Note that there are exactly $T$ unique noise variables used by \cref{alg:main_alg_simple}, since $p$ will take on all values in $[1, T]$ when iterating over $t$.
   Rather than resetting $noise$ on each iteration of $t$, we can instead update it, removing noise terms that are no longer used and adding new noise as needed.
   Each of them will be added exactly once and subtracted at most once (\cref{prop:p_used_on_interval}), therefore we run in time $O(T)$ for this case.
   For $(k^{h-1}+1)/2 \leq T < (k^h-1)/2 = T'$, note that $\frac{T'}{T} \leq \frac{(k^h - 1)/2}{(k^{h-1} + 1)/2} \leq k$, and therefore if we estimate the runtime of $T$ by that for $T'$, we get a runtime of $O(kT')$ in this case.
\end{proof}
Finally, combining Lemmas~\ref{lemma:main_private},~\ref{lemma:main_mse},~\ref{lemma:log_space}~and~\ref{lemma:linear_time} gives us \cref{thm:main}.

\section{Laplace vs Gaussian noise}\label{sec:laplace_vs_gauss}

Having designed an efficient $\epsilon$-DP algorithm for continual counting with a state-of-the-art bound on the error, a natural question arises: how does it compare to \approxdp{} mechanisms?

Since the noise of the Gaussian mechanism grows as $\delta$ approaches zero, it is clear that for any given bounds on $\ell_1$ and $\ell_2$ sensitivity, using the Gaussian mechanism with sufficiently small $\delta$ will result in more noise than the Laplace mechanism.
We will see that for current continual counting methods this change happens for surprisingly large values of $\delta$, with the exact threshold depending on the number of time steps $T$.

Furthermore, we adapt the Laplace mechanism to work with $\ell_2$-sensitivity and show that it provides a meaningful \approxdp{} guarantee with variance close to that of the Gaussian mechanism.

\subsection{Laplace vs Gaussian noise for continual counting}\label{sec:laplace_ell1}
We start by stating a lemma that compares the currently best continual counting mechanisms under pure and approximate differential privacy.
\begin{lemma}\label{lemma:compare_errors}
    For constants $B_\epsilon, B_{\epsilon, \delta} > 0$, let $(B_\epsilon/\epsilon^2)\log(T)^3 + o\big(\log(T)^3\big)$ be a bound on the mean squared error of an $\epsilon$-DP mechanism, and $(B_{\epsilon, \delta}/\epsilon^2)\log(T)^2\log(1/\delta) + o\big(\log(T)^2\log(1/\delta)\big)$ be a bound on the mean squared error of an $(\epsilon, \delta)$-DP mechanism, both for continual counting. 
    Then the $\epsilon$-DP mechanism achieves equivalent or better mean squared error, up to lower-order asymptotically vanishing terms, for $\delta = O(1/T^{B_\epsilon / B_{\epsilon, \delta}})$.
\end{lemma}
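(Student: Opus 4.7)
The plan is to prove this by a direct algebraic comparison of the leading-order terms of the two error bounds, and then to verify that the lower-order terms cause no trouble near the threshold.

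First I would write the leading-order inequality one wants, namely that the $\epsilon$-DP bound is at most the $(\epsilon,\delta)$-DP bound:
\[
\frac{B_\epsilon}{\epsilon^2}\log(T)^3 \;\leq\; \frac{B_{\epsilon,\delta}}{\epsilon^2}\log(T)^2\log(1/\delta).
\]
Multiplying both sides by $\epsilon^2$ and dividing by $\log(T)^2$ yields $B_\epsilon\log(T) \leq B_{\epsilon,\delta}\log(1/\delta)$, which rearranges to $\log(1/\delta) \geq (B_\epsilon/B_{\epsilon,\delta})\log(T)$, i.e.\ $\delta \leq T^{-B_\epsilon/B_{\epsilon,\delta}}$. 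Equivalently, $\delta = O\bigl(1/T^{B_\epsilon/B_{\epsilon,\delta}}\bigr)$ is exactly the regime where the leading terms satisfy the required ordering.

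Second, I would address the \enquote{up to lower-order asymptotically vanishing terms} clause by checking that at the threshold $\delta = \Theta(T^{-B_\epsilon/B_{\epsilon,\delta}})$ we have $\log(1/\delta) = \Theta(\log T)$, so that the $(\epsilon,\delta)$-DP leading term $(B_{\epsilon,\delta}/\epsilon^2)\log(T)^2\log(1/\delta)$ is $\Theta(\log(T)^3)$, matching the order of the $\epsilon$-DP leading term. Consequently, the additive $o(\log(T)^3)$ and $o(\log(T)^2\log(1/\delta))$ terms are both of order $o(\log(T)^3)$ in this regime, so they can be absorbed into a single lower-order remainder and do not affect the comparison of the leading constants.

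There is essentially no obstacle here: the statement is really just algebra on the two error formulas together with a consistency check on the error terms at the threshold. The only minor subtlety is being careful that the comparison is stated as $\delta = O(\cdot)$ rather than an exact equality, so I would phrase the conclusion as: for any $\delta$ satisfying $\log(1/\delta) \geq (B_\epsilon/B_{\epsilon,\delta})\log(T)$, the leading-order term of the $\epsilon$-DP bound is at most that of the $(\epsilon,\delta)$-DP bound, and both lower-order remainders are $o(\log(T)^3)$, which gives the claimed result.
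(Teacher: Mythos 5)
Your proposal is correct and follows exactly the paper's (one-line) argument: compare the leading terms, solve for $\delta$, and note that the lower-order remainders are all $o(\log(T)^3)$ in the relevant regime. Your additional check that $\log(1/\delta)=\Theta(\log T)$ at the threshold is a welcome elaboration but does not change the approach.
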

\begin{proof}
    Comparing the leading terms of the errors and solving for $\delta$ yields the statement.
\end{proof}
\begin{proof}[Proof of \Cref{thm:beats_gaussian_noise}]
    \Cref{alg:main_alg_simple} has a leading constant of $B_\epsilon < 0.124$ for $k=19$ under \puredp{}, and the corresponding constant in \cite{henzinger_almost_2023} is $B_{\epsilon, \delta} = 4 / (\pi^2 \log(e)^3) > 0.1349$, so we have $B_\epsilon / B_{\epsilon, \delta} < 0.92$.
    \Cref{lemma:compare_errors} thus implies that the pure mechanism has a smaller bound on the noise for $\delta = O(T^{-0.92})$.
\end{proof}

\subsection{Laplace noise scaled to \texorpdfstring{$\ell_2$}{L2}-sensitivity}

We next consider how we may extend the Laplace mechanism with an \emph{approximate} DP guarantee similar to the Gaussian mechanism, i.e., with noise scaled to the $\ell_2$-sensitivity.
Since the proof of this is straightforward using known results we suspect this fact may be known, but we are not aware of it appearing in the literature.
The technique for extending the Laplace mechanism to approximate DP relies on the following heterogeneous composition theorem \cite{kairouz15}.
\begin{lemma}[Simplified version of Theorem 3.5 in \cite{kairouz15}]\label{lemma:composition_main}
    For any $\epsilon_j > 0$ for $j\in\{1,,\dots, k\}$, and $\delta\in (0, 1)$, the class of $\epsilon_j$-differentially private mechanisms satisfy $(\tilde{\epsilon}, \delta)$-differential privacy under $k$-fold adaptive composition for
    \begin{equation*}
       \tilde{\epsilon} = \min
       \bigg\{ 
           \sum_{j=1}^k \epsilon_j,\,
           \sum_{j=1}^k \frac{(e^{\epsilon_j} - 1)\epsilon_j}{e^{\epsilon_j} + 1} + \sqrt{2\ln(1/\delta)\sum_{j=1}^k \epsilon_j^2},
       \bigg\}  \enspace .
    \end{equation*}
\end{lemma}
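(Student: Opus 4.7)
The plan is to decompose the minimum into its two arguments and prove each bound separately. The first argument, $\sum_{j=1}^k \epsilon_j$, is just \emph{basic sequential composition} of pure DP: for any neighbors $\D \sim \D'$ and any adaptive transcript $(o_1, \dots, o_k)$, the joint density ratio factorizes as $\prod_{j=1}^k \Pr[M_j(\D; o_{<j}) = o_j]/\Pr[M_j(\D'; o_{<j}) = o_j]$, each factor is bounded by $e^{\epsilon_j}$ by $\epsilon_j$-DP of $M_j$, and the product is at most $e^{\sum_j \epsilon_j}$. This gives pure $\big(\sum_j \epsilon_j\big)$-DP, which trivially implies the $(\cdot, \delta)$-DP version for any $\delta \in (0,1)$.

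For the second, tighter argument I would use the standard privacy loss random variable (PLRV) analysis. Fix neighbors $\D, \D'$ and, for an adaptive transcript drawn under $\D$, define
$$L_j(o_{\leq j}) \;=\; \ln \frac{\Pr[M_j(\D; o_{<j}) = o_j]}{\Pr[M_j(\D'; o_{<j}) = o_j]}.$$
Two per-step facts drive the argument: (i) $|L_j| \leq \epsilon_j$ almost surely, directly from $\epsilon_j$-DP of $M_j$ applied to the conditional distributions; and (ii) the conditional expectation bound $\expectation[L_j \mid o_{<j}] \leq \mu_j := (e^{\epsilon_j}-1)\epsilon_j/(e^{\epsilon_j}+1)$, which is the classical maximum of $\mathrm{KL}(P\|Q)$ over pairs whose log-density ratio is bounded in $[-\epsilon_j,\epsilon_j]$ (the maximum is attained on the two-point distribution supported at $\pm\epsilon_j$, reducing the claim to a short calculus exercise). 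The differences $L_j - \expectation[L_j \mid o_{<j}]$ therefore form a bounded martingale difference sequence; applying the Azuma–Hoeffding inequality with the sharpened range coming from the two-point extremal distribution yields, with probability at least $1-\delta$ over $o_{\leq k} \sim M(\D)$,
$$\sum_{j=1}^k L_j \;\leq\; \sum_{j=1}^k \mu_j \;+\; \sqrt{2\ln(1/\delta)\sum_{j=1}^k \epsilon_j^2}\,.$$
The final step is the standard translation from a high-probability PLRV bound to approximate DP: if $\Pr_{o \sim M(\D)}[L(o) > \tilde\epsilon] \leq \delta$, then for every measurable event $E$ one obtains $\Pr[M(\D) \in E] \leq e^{\tilde\epsilon}\Pr[M(\D') \in E] + \delta$ by splitting the integral on the good event $\{L \leq \tilde\epsilon\}$ and its complement.

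The main obstacle is getting the constant inside the square root to be exactly $\sum_j \epsilon_j^2$ rather than the $\sum_j (2\epsilon_j)^2$ that a naive Azuma applied to variables of range $2\epsilon_j$ would produce. This requires the MGF-based refinement used by Kairouz, Oh and Viswanath: one bounds $\expectation[e^{\lambda(L_j - \mu_j)} \mid o_{<j}]$ by its value on the two-point extremal distribution and optimizes the Chernoff exponent, which is where the factor $\epsilon_j^2$ (instead of $(2\epsilon_j)^2$) appears. A secondary subtlety is that the per-step bounds (i) and (ii) must hold \emph{conditionally} on the transcript $o_{<j}$; this is immediate because $\epsilon_j$-DP of $M_j$ is required to hold for every auxiliary input, so the adaptive martingale structure is preserved without extra work.
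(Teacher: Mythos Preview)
The paper does not prove this lemma at all: it is stated with attribution to Kairouz, Oh and Viswanath and then used as a black box in the proof of \Cref{lemma:laplace_approx}. So there is no ``paper's own proof'' to compare against; you have supplied a proof sketch where the authors simply cite one.

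That said, your sketch is essentially the Kairouz--Oh--Viswanath argument and is correct in outline. One clarification on what you call the ``main obstacle'': the sharper constant $\sum_j \epsilon_j^2$ (rather than $\sum_j (2\epsilon_j)^2$) does not require anything beyond Hoeffding's lemma applied conditionally. Since $L_j \in [-\epsilon_j,\epsilon_j]$ has \emph{range} $2\epsilon_j$, Hoeffding's lemma gives $\expectation[e^{\lambda(L_j-\expectation[L_j\mid o_{<j}])}\mid o_{<j}] \le e^{\lambda^2(2\epsilon_j)^2/8} = e^{\lambda^2\epsilon_j^2/2}$, and the Chernoff step then yields exactly $\sqrt{2\ln(1/\delta)\sum_j \epsilon_j^2}$. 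The factor-of-four loss you worry about only appears if one uses the cruder Azuma statement phrased in terms of $|X_j|\le c_j$ rather than the range. The two-point extremal reduction you invoke is what pins down the mean term $\mu_j = \epsilon_j(e^{\epsilon_j}-1)/(e^{\epsilon_j}+1)$; for the concentration term, ordinary Hoeffding already suffices.
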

Using \Cref{lemma:composition_main}, we can derive the following approximate DP guarantee for the $\ell_2$ Laplace mechanism.
\begin{lemma}[$\ell_2$ Laplace Mechanism]\label{lemma:laplace_approx}
    Let $f : \mathcal{X}^n \to \R^d$ be a function with $\ell_2$-sensitivity $\Delta_2 \coloneqq \max_{\D\sim\D'}\norm{f(\D)-f(\D')}_2$.
    For a given dataset $\D\in\mathcal{X}^n$ and $\epsilon, \delta \in (0, 1)$, the output $f(\D) + \lap(\Delta_2 / a_{\epsilon, \delta})^d$, where $a_{\epsilon, \delta} = \sqrt{2\ln(1/\delta)}(\sqrt{1 + \epsilon/\ln(1/\delta)}-1)$, satisfies \approxdp{}.
\end{lemma}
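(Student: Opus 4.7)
The plan is to view the $d$-dimensional Laplace release $f(\D) + \lap(\lambda)^d$ with $\lambda = \Delta_2 / a_{\epsilon,\delta}$ as the (adaptive) composition of $d$ one-dimensional Laplace releases, one per coordinate, and then invoke the heterogeneous composition theorem of \cref{lemma:composition_main}. Since the coordinates of $\lap(\lambda)^d$ are independent, releasing $f(\D) + \lap(\lambda)^d$ has the same distribution as sequentially releasing $f(\D)_j + \lap(\lambda)$ for $j=1,\dots,d$. For a fixed pair of neighbors $\D \sim \D'$, the $j$-th release is $\epsilon_j$-DP (in fact $\epsilon_j$-indistinguishable for this particular pair) with
\[
    \epsilon_j \;=\; \frac{|f(\D)_j - f(\D')_j|}{\lambda}.
\]

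Next I would apply \cref{lemma:composition_main} with these $\epsilon_j$'s. Two bounds are needed. First, by the definition of $\ell_2$-sensitivity,
\[
    \sum_{j=1}^d \epsilon_j^2 \;=\; \frac{\|f(\D)-f(\D')\|_2^2}{\lambda^2} \;\le\; \frac{\Delta_2^2}{\lambda^2}.
\]
Second, using the identity $(e^x-1)/(e^x+1) = \tanh(x/2)$ together with $\tanh(x/2) \le x/2$ for $x \ge 0$, one gets
\[
    \sum_{j=1}^d \frac{(e^{\epsilon_j}-1)\epsilon_j}{e^{\epsilon_j}+1} \;\le\; \frac{1}{2}\sum_{j=1}^d \epsilon_j^2 \;\le\; \frac{\Delta_2^2}{2\lambda^2}.
\]
Plugging these into the second term of the min in \cref{lemma:composition_main} yields that the release is $(\tilde\epsilon,\delta)$-DP with
\[
    \tilde\epsilon \;\le\; \frac{\Delta_2^2}{2\lambda^2} + \sqrt{2\ln(1/\delta)}\cdot \frac{\Delta_2}{\lambda},
\]
uniformly over all neighboring pairs.

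It remains to choose $\lambda$ so that this upper bound equals $\epsilon$. Writing $y = \Delta_2/\lambda$, we need $y^2/2 + \sqrt{2\ln(1/\delta)}\, y = \epsilon$. Solving the quadratic and taking the positive root gives
\[
    y \;=\; -\sqrt{2\ln(1/\delta)} + \sqrt{2\ln(1/\delta) + 2\epsilon} \;=\; \sqrt{2\ln(1/\delta)}\left(\sqrt{1+\epsilon/\ln(1/\delta)}-1\right) \;=\; a_{\epsilon,\delta}.
\]
Thus $\lambda = \Delta_2/a_{\epsilon,\delta}$ is exactly the scale for which the composition bound yields $\tilde\epsilon \le \epsilon$, which completes the argument.

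The main subtlety to be careful about is that the per-coordinate $\epsilon_j$'s depend on the particular pair $\D,\D'$, so \cref{lemma:composition_main} is being applied with neighbor-specific parameters; however, because the final bound only depends on $\sum_j \epsilon_j^2$, which is uniformly bounded by $\Delta_2^2/\lambda^2$, the $(\epsilon,\delta)$-DP guarantee holds uniformly over all neighbors. A minor auxiliary step is the elementary inequality $\tanh(x/2) \le x/2$, but nothing in the calculation requires smallness of $\epsilon$ or $\epsilon_j$ beyond what is already assumed.
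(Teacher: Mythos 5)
Your proposal is correct and follows essentially the same route as the paper's proof: decompose the $d$-dimensional release into per-coordinate Laplace releases with neighbor-specific $\epsilon_j = |f(\D)_j - f(\D')_j|/\lambda$, apply the heterogeneous composition theorem of \cref{lemma:composition_main} after simplifying $\tfrac{e^{\epsilon_j}-1}{e^{\epsilon_j}+1}\epsilon_j \le \epsilon_j^2/2$, bound $\sum_j \epsilon_j^2$ by the $\ell_2$-sensitivity, and solve the resulting quadratic for the scale. Your explicit remark that the guarantee holds uniformly over neighbors because the bound depends only on $\sum_j \epsilon_j^2 \le \Delta_2^2/\lambda^2$ is a point the paper treats more tersely, but the argument is the same.
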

\begin{proof}
    For $x_1,\dots,x_j \in (0,1)$, since $\frac{e^{x_i} - 1}{e^{x_i} + 1} \leq x_i/2$ it follows from \Cref{lemma:composition_main} that $(\tilde{\epsilon},\delta)$-differential privacy is also satisfied for
    \begin{equation}\label{eq:simple-eps}
        \tilde{\epsilon} = \min
        \bigg\{ 
            \sum_{j=1}^k \epsilon_j,\,
            \frac{1}{2}\sum_{j=1}^k \epsilon_j^2 + \sqrt{2\ln(1/\delta)\sum_{j=1}^k \epsilon_j^2},
        \bigg\} \enspace .
    \end{equation}
    It suffices to establish the privacy of the Laplace mechanism when restricted to arbitrary neighboring inputs $x$, $x'$.
    We will reason about the privacy loss for each coordinate of $\vec{y} = f(x)$ separately relative to the neighboring output $\vec{y}' = f(x')$.
    Focusing on one coordinate $\vec{y}_j$ and its private release, we have that its release must satisfy $\epsilon_j$-differential privacy where $\epsilon_j = (a_{\epsilon, \delta}/\Delta_2)\cdot \abs{\vec{y}_j - \vec{y}_j'}$.
    Composing the $d$ private releases of each coordinate using the bound (\ref{eq:simple-eps}) and noting that $\sum_{j=1}^{d} \epsilon_j^2 \leq (a_{\epsilon, \delta}/ \Delta_2)^2 \sum_{j=1}^d (\vec{y}_j - \vec{y}_j')^2 \leq a_{\epsilon, \delta}^2$ therefore gives an $(\epsilon, \delta)$-guarantee where $\epsilon = a_{\epsilon, \delta}(\frac{a_{\epsilon, \delta}}{2} + \sqrt{2\ln(1/\delta)})$.
    Proceeding to solve for (positive) $a_{\epsilon, \delta}$ gives the statement.
\end{proof}
The variance needed for using Laplace noise instead of Gaussian noise for \approxdp{} is only a constant factor higher whenever $\epsilon \leq \log(1/\delta)$, and the constant approaches $2$ as $\epsilon$ approaches zero.

\begin{corollary}\label{cor:variance_laplace_vs_gaussian}
    Let $\vec{y}^{\mathrm{lap}}$ and $\vec{y}^{\mathrm{gauss}}$ be outputs based on the $\ell_2$ Laplace mechanism (\Cref{lemma:laplace_approx}) and the Gaussian mechanism (\Cref{lemma:gaussian_noise_scaling}) respectively, both to achieve the same \approxdp{} guarantee where $\epsilon \leq \ln(1/\delta)$.
    Then $\var[\vec{y}^{\mathrm{lap}}_j] \leq \frac{(\epsilon/\ln(1/\delta))^2}{2(\sqrt{1+\epsilon/\ln(1/\delta)} - 1)^2}\cdot\var[\vec{y}^{\mathrm{gauss}}_j]$. Also, $\lim_{\epsilon\to 0} \frac{(\epsilon/\ln(1/\delta))^2}{2(\sqrt{1+\epsilon/\ln(1/\delta)} - 1)^2} = 2$.
\end{corollary}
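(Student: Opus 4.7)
\textbf{Proof plan for \Cref{cor:variance_laplace_vs_gaussian}.}

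The plan is to simply compute the two variances in closed form, take their ratio, and check that the condition $\epsilon \le \ln(1/\delta)$ makes the comparison meaningful. For the $\ell_2$ Laplace mechanism of \Cref{lemma:laplace_approx}, each coordinate is drawn from $\lap(\Delta_2/a_{\epsilon,\delta})$, which has variance $2\Delta_2^2/a_{\epsilon,\delta}^2$. For the Gaussian mechanism (referenced via \Cref{lemma:gaussian_noise_scaling}), each coordinate has variance $\sigma^2 = 2\Delta_2^2\ln(1/\delta)/\epsilon^2$, matching the standard scaling used in the excerpt. Substituting $a_{\epsilon,\delta}^2 = 2\ln(1/\delta)\bigl(\sqrt{1+\epsilon/\ln(1/\delta)}-1\bigr)^2$ from the statement of \Cref{lemma:laplace_approx} and forming the quotient, the $\Delta_2^2$ factors cancel and the $\ln(1/\delta)$ factors combine to give exactly the claimed ratio
\[
\frac{\var[\vec{y}^{\mathrm{lap}}_j]}{\var[\vec{y}^{\mathrm{gauss}}_j]} \;=\; \frac{(\epsilon/\ln(1/\delta))^2}{2\bigl(\sqrt{1+\epsilon/\ln(1/\delta)}-1\bigr)^2}.
\]
This establishes the inequality (with equality, in fact) for all $\epsilon,\delta\in(0,1)$ for which both mechanisms are defined; the hypothesis $\epsilon\le\ln(1/\delta)$ is precisely what guarantees $u \coloneqq \epsilon/\ln(1/\delta) \le 1$, the regime in which the Gaussian-mechanism variance formula is being used and the ratio is a meaningful constant close to $2$.

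For the limit statement, I would set $u = \epsilon/\ln(1/\delta)$, so the quantity in question becomes $u^2/\bigl(2(\sqrt{1+u}-1)^2\bigr)$, and $u \to 0$ as $\epsilon \to 0$ for fixed $\delta$. Expanding $\sqrt{1+u} - 1 = u/2 - u^2/8 + O(u^3)$ gives $(\sqrt{1+u}-1)^2 = u^2/4 + O(u^3)$, so the ratio tends to $u^2/(2\cdot u^2/4) = 2$. A slightly cleaner alternative is to rationalize: $\sqrt{1+u}-1 = u/(\sqrt{1+u}+1)$, which turns the ratio into $(\sqrt{1+u}+1)^2/2$, whose value at $u = 0$ is exactly $2$; this also makes the monotonicity of the ratio in $u$ transparent and shows that for all $u \in (0,1]$ the ratio is bounded above by $(\sqrt{2}+1)^2/2 < 3$, explaining the ``constant factor'' claim in the surrounding text.

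I do not anticipate a real obstacle: every step is a direct substitution or a first-order Taylor expansion. The only thing to be careful about is being explicit about which form of the Gaussian-mechanism variance is being invoked from \Cref{lemma:gaussian_noise_scaling} so that the ratio matches the stated expression exactly rather than up to an unstated constant; once that is pinned down, the rest is bookkeeping.
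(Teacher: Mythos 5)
Your proposal is correct and follows essentially the same route as the paper: write both variances in closed form, take the ratio, and expand $\sqrt{1+u}$ for the limit. The one point you flagged resolves as follows: the paper's \Cref{lemma:gaussian_noise_scaling} uses variance $2\ln(1.25/\delta)\Delta_2^2/\epsilon^2$, and since $\ln(1.25/\delta)\geq\ln(1/\delta)$ the exact ratio is only upper-bounded by (not equal to) the displayed expression, which is precisely why the corollary states an inequality; your rationalization $\sqrt{1+u}-1=u/(\sqrt{1+u}+1)$ for the limit is a slightly cleaner variant of the paper's Taylor-truncation argument.
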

\begin{proof}
    Using \Cref{lemma:laplace_approx} and noting that the variance of the noise added is twice the square of the scale parameter for Laplace noise, we get that 
    \begin{equation*}
        \frac{\var[\vec{y}^{\mathrm{lap}}_j]}{\var[\vec{y}^{\mathrm{gauss}}_j]} = \frac{2\Delta_2^2/(2\ln(1/\delta)(\sqrt{1 + \epsilon/\ln(1/\delta)} - 1)^2)}{2\ln(1.25/\delta)\Delta^2_2 / \epsilon^2} \leq \frac{(\epsilon / \ln(1/\delta))^2}{2(\sqrt{1 + \epsilon/\ln(1/\delta)} - 1)^2} \enspace .
    \end{equation*}
    To show the limit, note that for positive $w \leq 1$ we have that $\frac{w}{\sqrt{1 + w} - 1} \leq \frac{w}{w/2 - w^2/8} = \frac{2}{1 - w/4} \to 2$ as $w\to 0$, and $\frac{w}{\sqrt{1 + w} - 1} \geq 2$, both based on truncated Taylor expansions of $\sqrt{1+w}$.
\end{proof}

Finally we prove \Cref{theorem:laplace_both} from the introduction.
\begin{proof}[Proof of \Cref{theorem:laplace_both}]
    From the proof of \Cref{lemma:laplace_approx}, we have that adding noise $\lap(\Delta_2/a)$ gives an $(\epsilon', \delta)$-guarantee where $\epsilon' = a(\frac{a}{2} + \sqrt{2\ln(1/\delta)})$ if $\epsilon',\delta\in(0, 1)$.
    Note that our setup is the same if we set $a = \frac{\Delta_2}{\lambda}$, and that adding noise $\lap(\lambda)$ gives an $\frac{\Delta_1}{\lambda}$-DP guarantee.
    We can therefore always guarantee $(\epsilon, \delta)$-DP where $\epsilon = \min\{\frac{\Delta_1}{\lambda}, \epsilon'\} < 1$ if we enforce $\lambda > \Delta_1$.
\end{proof}
Note that while functions with low $\ell_1$-sensitivity also have low $\ell_2$-sensitivity, the converse does not hold.
In principle we can replace Gaussian noise with Laplace noise at a limited loss in utility for approximate DP (\Cref{cor:variance_laplace_vs_gaussian}), but the corresponding \puredp{} guarantee afforded by the noise will be weak unless $\Delta_1$ is small.
\Cref{theorem:laplace_both} can alternatively be seen as adding an approximate DP guarantee to pure DP mechanisms built on top of the Laplace mechanism.

\section{Lower bound for continual observation on random inputs}\label{sec:lowerbound}

In this section we consider continual observation in the ``offline'' setting where a mechanism has access to the entire input and has to output a differentially private approximation of all prefix sums. 
This of course implies a lower bound for the ``online'' setting.
We show \Cref{thm:lower} which achieves the same lower bound that was previously known for worst-case inputs, in the setting where the input is a random string.

\begin{proof}[Proof of \Cref{thm:lower}]
    We will prove the theorem via a packing argument applied to a carefully constructed collection of random strings.
    Without loss of generality, let $B=\sqrt{T}$ be an integer divisible by $4$, $k\leq T^{1/5}$ be an even integer, and $T$ be large enough such that $k \leq B/4$.
    Also divide the timeline $[1, T]$ into blocks of length $B$: $B_1=[1,B], B_2=[B+1,2B], \dots, B_m=[T-B+1,T]$, $m=T/B$.
    For an arbitrary string $\vec{y}\in\{0, 1\}^T$, we use the notation $\vec{y}_{B_i}\in\{0,1\}^B$ for the substring of $\vec{y}$ formed from the indices in $B_i$.
    We use $\mathrm{err}(\cdot)$, taking as input the output of a mechanism, to denote the (random) maximum additive error over all the outputs.
    The argument is outlined below:
    \begin{itemize}
        \item Construct random strings $\vec{x}^{(0)},\dots, \vec{x}^{(m)}$ that are near-neighboring and have small total variation distance to random strings from $\mathcal{U}$.
        \item Argue that $\mathcal{M}$ assumes additive error $\leq\alpha$ on these inputs with constant probability by virtue of the small total variation distance.
        \item Show that for $i\ne i'$ $\mathcal{M}$ can be used to distinguish $\vec{x}^{(i)}$ from $\vec{x}^{(i')}$ unless $\alpha=\Omega(\varepsilon^{-1}\log T)$.
    \end{itemize}
    The intuition behind the argument is that an accurate mechanism $\mathcal{M}$ on $\mathcal{U}$ will also be accurate on our random (but highly correlated) strings $\vec{x}^{(0)},\dots \vec{x}^{(m)}$.
    Being accurate on these strings implies the ability to distinguish between them.
    However, since $\mathcal{M}$  satisfies $\varepsilon$-DP, distinguishing between the inputs with good confidence is impossible, implying a lower bound on the additive error $\alpha$. 

    \medskip
    
    To this end, we define the random variables $X^{(i)}_j = \norm{\vec{x}^{(i)}_{B_j}}_1 = \sum_{\ell\in B_j} \vec{x}^{(i)}_\ell$, i.e., as the sum of 1s in the block $B_j$ of $\vec{x}^{(i)}$.
    With the argument outlined, consider the random strings $\vec{x}^{(0)}, \vec{x}^{(1)},\dots, \vec{x}^{(m)}$ formed by the following process:
    \begin{enumerate}
        \item Draw $\vec{x}^{(0)}$ from $\mathcal{U}$ conditioned on $X^{(0)}_j\in [\frac{B}{4}, \frac{3B}{4}]$ for all $j\in [m]$. %
        Call this distribution~$\mathcal{D}^{(0)}$.
        \item For $i\in[m]$ derive $\vec{x}^{(i)}$ from $\vec{x}^{(0)}$ by choosing $k$ 0s in $B_i$ uniformly at random and flipping them to 1s.
        Call these distributions $\mathcal{D}^{(i)}$.
    \end{enumerate}
    
    We next define the algorithm $\Alg(\vec{y}, \vec{y}')$ outputting the difference between private sums on two arbitrary inputs $\vec{y}, \vec{y}'\in\{0, 1\}^T$.
    Formally, on receiving $\vec{y}$ and $\vec{y}'$, it runs $\M$ on each, producing outputs $\vec{a}_1, \dots, \vec{a}_T$ and $\vec{a}_1', \dots, \vec{a}_T'$ respectively, and at time $t$, the algorithm outputs $\Alg(\vec{y}, \vec{y}')_t = \vec{a}_t-\vec{a}_t'$.
    We will show, step-by-step, that the accuracy guarantee of $\M$ on $\mathcal{U}$ will imply distinguishing between our random strings $\vec{x}^{(0)}, \dots, \vec{x}^{(i)}$.
    
    Consider $\Alg(\vec{x}^{(i)}, \vec{x}^{(0)})$ for $i\in[1, m]$.
    Note that the true prefix sums of $\vec{x}^{(i)}$ and $\vec{x}^{(0)}$ at the end of each block will be identical up until block $B_i$, at the end of which they will differ by exactly $k$. 
    We leverage this observation by introducing the following disjoint events $E_j$.
    For $j\in[1, m]$ and arbitrary input strings $\vec{y}$, $\vec{y}'$, we define $E_j$ as the event that the following two statements are true: (1) for all $\ell\in[1, j-1] : \Alg(\vec{y}, \vec{y}')_{\ell B} \leq k/2$, and (2) $\Alg(\vec{y}, \vec{y}')_{jB} > k/2$.
    In other words, $E_j$ is the event that, when tracking only the outputs at the end of blocks, $\Alg(\vec{y}, \vec{y}')$ outputs a value greater than $k/2$ for the first time at the end of block $B_j$.
    Set $\alpha=k/4$.
    By our construction, if $\M(\vec{x}^{(0)})$ and $\M(\vec{x}^{(i)})$ each assumes additive error at most $\alpha$ when running $\Alg(\vec{x}^{(i)}, \vec{x}^{(0)})$, then $\Alg(\vec{x}^{(i)}, \vec{x}^{(0)})$ has additive error at most $2\alpha=k/2$, and $E_i$ happens with probability $1$.
    We thus have
    \begin{equation*}
        \Pr[\Alg(\vec{x}^{(i)}, \vec{x}^{(0)})\in E_i] \geq \Pr[\M(\vec{x}^{(i)}) \leq \alpha]\cdot\Pr[\M(\vec{x}^{(0)}) \leq \alpha]\enspace .
    \end{equation*}

    We want to argue that $\mathcal{M}$ is accurate on $\vec{x}^{(0)},\dots, \vec{x}^{(m)}$, with constant probability for each input taken separately, and we will argue via the total variation distance between each of these random strings, and the random string $\vec{x}\sim\mathcal{U}$.
    In general, for a (possibly randomized) algorithm $\mathcal{A} : S \to \R$, distributions $\mathcal{D}, \mathcal{D}'$ over $S$ with total variation distance $\tvd(\mathcal{D}, \mathcal{D}') \leq \Delta$, and an event $E$ over the output, we have that if $\Pr_{\mathcal{A}, x\sim\mathcal{D}}[\mathcal{A}(x)\in E] \geq p \geq 0$, then $\Pr_{\mathcal{A}, x\sim\mathcal{D}'}[\mathcal{A}(x)\in E] \geq p - \Delta$.
    This follows from the observation that, in the extreme case, $\Delta$ probability mass can be shifted from inputs $x\in S$ where $\Pr_{\mathcal{A}}[\mathcal{A}(x)\in E] = 1$ to inputs $x'\in S$ where $\Pr_{\mathcal{A}}[\mathcal{A}(x')\in E] = 0$.
    If we manage to show that, for $i\in[0, m]$, $\tvd(\vec{x}, \vec{x}^{(i)}) = o(1)$, then we can lower bound $\Pr[\Alg(\vec{x}^{(i)}, \vec{x}^{(0)})\in E_i]$ by a positive constant.
    We derive bounds on these total variation distances in the next two lemmas.
    \begin{lemma}\label{lemma:r0_uniformish}
       Let $\vec{x}\sim\mathcal{U}$ and $\vec{x}^{(0)}\sim\mathcal{D}^{(0)}$. Then $\tvd(\vec{x}, \vec{x}^{(0)}) \leq \Delta$ where $\Delta = 2\sqrt{T}\exp(-\sqrt{T}/8)$
    \end{lemma}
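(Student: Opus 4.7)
The plan is to exploit the fact that $\mathcal{D}^{(0)}$ is simply $\mathcal{U}$ conditioned on the event $E \coloneqq \{X_j \in [B/4, 3B/4] \text{ for all } j \in [m]\}$. A standard fact about conditioning is that for any probability distribution $P$ and event $E$ with $P(E) > 0$, the conditional distribution $P(\cdot\mid E)$ satisfies $\tvd(P, P(\cdot\mid E)) = 1 - P(E) = \Pr_P[\bar{E}]$, since the total variation distance is maximized on the event $\bar{E}$ itself. So it suffices to upper bound $\Pr_{\vec{x}\sim\mathcal{U}}[\bar{E}]$ by $\Delta$.

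Next I would apply a union bound over the $m = \sqrt{T}$ blocks:
\begin{equation*}
\Pr_{\vec{x}\sim\mathcal{U}}[\bar{E}] \le \sum_{j=1}^{m} \Pr_{\vec{x}\sim\mathcal{U}}\!\left[X_j \notin [B/4,\, 3B/4]\right].
\end{equation*}
Under $\mathcal{U}$, the coordinates are i.i.d.\ $\mathrm{Bernoulli}(1/2)$, so each $X_j$ is a sum of $B$ independent Bernoulli$(1/2)$ variables with mean $B/2$. A standard Hoeffding bound then gives
\begin{equation*}
\Pr\!\left[\,|X_j - B/2| > B/4\,\right] \le 2\exp\!\left(-2(B/4)^2/B\right) = 2\exp(-B/8).
\end{equation*}
Combining with the union bound and substituting $B = \sqrt{T}$ and $m = \sqrt{T}$ yields
\begin{equation*}
\tvd(\vec{x}, \vec{x}^{(0)}) \le m \cdot 2\exp(-B/8) = 2\sqrt{T}\exp(-\sqrt{T}/8) = \Delta,
\end{equation*}
as required.

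There is no substantive obstacle here: the only step that requires any care is the equality $\tvd(P, P(\cdot\mid E)) = \Pr_P[\bar{E}]$, which follows by a direct computation (on $\bar{E}$ itself the conditional distribution puts mass $0$, so $|P(\bar{E}) - 0| = \Pr_P[\bar{E}]$, and on any other event the discrepancy is at most this value). Everything else is a routine Hoeffding-plus-union-bound calculation tuned to the choice $B = \sqrt{T}$.
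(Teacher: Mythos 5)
Your proof is correct and follows essentially the same route as the paper: both reduce the total variation distance to the probability that a uniform string fails the block-count condition, and both bound that probability by $2\sqrt{T}\exp(-\sqrt{T}/8)$ via Hoeffding. The only cosmetic difference is that you use a union bound over blocks where the paper uses independence of the blocks followed by Bernoulli's inequality, which yields the identical numerical bound.
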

    \begin{proof}
     Let $S$ be the support of $\mathcal{U}$ and $S^{(0)} \subseteq S$ the support of $\mathcal{D}^{(0)}$. %
     We have that $X\sim \mathrm{Bin}(B, 1/2)$ describes the number of 1s in an arbitrary block $B_i$ of $\vec{x}$.
     Hoeffding's inequality gives $\Pr[\frac{B}{4} \leq X \leq \frac{3B}{4}] \geq 1 - 2\exp(-B/8)$ whereby it follows that $\Pr[\vec{x}\in S^{(0)}] \geq \big(1 - 2\exp(-B/8)\big)^{T/B}$.
     Observing that $\vec{x}^{(0)}$ is uniformly distributed over $S^{(0)}\subseteq S$, we proceed to directly compute the total variation distance: %
     \begin{align*}
     \tvd(\vec{x}, \vec{x}^{(0)}) &= \frac{1}{2}\sum_{\vec{s}\in S}|\Pr[\vec{x}=\vec{s}] - \Pr[\vec{x}^{(0)}=\vec{s}]|\\
     &= \frac{1}{2}|S^{(0)}|\bigg(\frac{1}{|S^{(0)}|}-\frac{1}{|S|}\bigg) + \frac{1}{2}(|S|-|S^{(0)}|)\cdot \frac{1}{|S|}
     = 1 - \frac{|S^{(0)}|}{|S|}
     = 1 - \Pr[\vec{x}\in S^{(0)}]\\
     &\leq 1 - \big(1-2\exp(-B/8)\big)^{T/B}\leq \frac{2T\exp(-B/8)}{B} = 2\sqrt{T}\exp(-\sqrt{T}/8)\enspace ,
     \end{align*} 
     where the last inequality is using Bernoulli's inequality and holds for $T$ larger than an absolute constant.
    \end{proof}
    We prove that the remaining random strings $\vec{x}^{(1)},\dots, \vec{x}^{(m)}$ are closely distributed to strings in $\mathcal{U}$ next.
    \begin{lemma}\label{lemma:ri_uniformish}
       Let $\vec{x}\sim \mathcal{U}$ and $\vec{x}^{(i)}\sim\mathcal{D}^{(i)}$ for $i\in[m]$.
       Then $\tvd(\vec{x}, \vec{x}^{(i)}) \leq \Delta'$ where $\Delta' = 10\cdot T^{-1/20}$.
    \end{lemma}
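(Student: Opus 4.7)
The plan is to exploit the product structure of both distributions and reduce to a per-block analysis. Under $\mathcal{U}$ the blocks $\vec{x}_{B_1}, \dots, \vec{x}_{B_m}$ are obviously independent; under $\mathcal{D}^{(0)}$ the conditioning event $\{X^{(0)}_j \in [B/4, 3B/4]\text{ for all }j\}$ factorizes across blocks, so blocks remain independent there too, and since $\vec{x}^{(i)}$ only modifies $B_i$ as a function of $\vec{x}^{(0)}_{B_i}$ the blocks of $\vec{x}^{(i)}$ are also independent. The standard tensorization bound $\tvd(\prod P_j, \prod Q_j) \leq \sum_j \tvd(P_j, Q_j)$ then lets me handle each block separately.

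For $j \neq i$ the marginal of $\vec{x}^{(i)}_{B_j}$ is exactly the conditioned-uniform distribution on weight-$[B/4, 3B/4]$ strings, so the same Hoeffding-based argument as in \Cref{lemma:r0_uniformish} gives $\tvd(\vec{x}_{B_j}, \vec{x}^{(i)}_{B_j}) \leq 2\exp(-B/8)$, and summing over the $m - 1 < \sqrt{T}$ such blocks produces a contribution negligible compared to $T^{-1/20}$. For block $B_i$, the key observation is that $\vec{x}^{(i)}_{B_i}$ is \emph{uniform conditional on its Hamming weight}: each weight-$L$ outcome $\vec{s}$ is reachable from exactly $\binom{L}{k}$ weight-$(L-k)$ predecessors under the flipping step, so by symmetry its probability depends only on $L$. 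Since $\vec{x}_{B_i}$ under $\mathcal{U}$ is likewise uniform given its weight, the block's TV distance collapses to the one between the integer-valued weight distributions, namely $\tvd(\vec{x}_{B_i}, \vec{x}^{(i)}_{B_i}) = \tvd(X,\, k + Y)$, where $X \sim \mathrm{Bin}(B, 1/2)$ and $Y$ is $X$ conditioned on $X \in [B/4, 3B/4]$. A triangle inequality splits this into $\tvd(X, X+k) + \tvd(X, Y)$, with the second term once more bounded by $2\exp(-B/8)$.

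The main obstacle, and what dictates the final $T^{-1/20}$ rate, is a clean bound $\tvd(X, X+k) = O(k/\sqrt{B})$. I plan to obtain this by writing $|\Pr[X = i] - \Pr[X = i - k]| = \Pr[X = i]\,\bigl|1 - \prod_{j=0}^{k-1}\tfrac{i-j}{B-i+1+j}\bigr|$, expanding the log of the product around $i = B/2$, summing over the central window $|i - B/2| = O(\sqrt{B\log B})$ where the expansion is uniformly accurate, and handling the tails with Hoeffding. Plugging in $B = \sqrt{T}$ and $k \leq T^{1/5}$ yields $O(T^{1/5}/T^{1/4}) = O(T^{-1/20})$, and absorbing constants for $T$ larger than an absolute constant gives the stated bound $10 \cdot T^{-1/20}$.
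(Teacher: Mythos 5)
Your proof is correct, but it is organized differently from the paper's. The paper goes through $\vec{x}^{(0)}$ by the triangle inequality and then applies a data-processing step: the map that resamples block $B_i$ uniformly given its weight fixes both $\mathcal{D}^{(0)}$ and $\mathcal{D}^{(i)}$, so $\tvd(\mathcal{D}^{(0)},\mathcal{D}^{(i)}) \le \tvd(X^{(0)}_i, X^{(0)}_i + k)$, and this last quantity is bounded directly for the \emph{conditioned} binomial using symmetry around $B/2$ and telescoping, yielding $2k\cdot\Pr[X^{(0)}_i = B/2]$. You instead tensorize over the independent blocks (the per-block contributions for $j\neq i$ being absorbed by the same Hoeffding bound as in \Cref{lemma:r0_uniformish}), use the same ``uniform given weight'' observation to collapse block $B_i$ to its weight distribution, and then shift to the \emph{unconditioned} binomial before comparing, which isolates the clean statement $\tvd(X, X+k) = O(k/\sqrt{B})$. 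Both routes rest on the identical key reduction to one-dimensional weight distributions; yours trades the paper's boundary bookkeeping for the conditioned binomial against one extra triangle inequality, and is arguably more modular. One simplification: your planned local expansion of the likelihood ratio is more work than needed. Since $\mathrm{Bin}(B,1/2)$ is unimodal, $\tvd(X, X+k) \le k\,\tvd(X,X+1) = k\max_i\Pr[X=i] = k\,2^{-B}\binom{B}{B/2}$, which is exactly the telescoping step the paper uses and gives $O(k/\sqrt{B})$ immediately by Stirling, with constants small enough to land inside the stated $10\cdot T^{-1/20}$.
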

    \begin{proof}
        As the total variation distance satisfies the triangle inequality, we can write
        \begin{equation*}
            \tvd(\vec{x}, \vec{x}^{(i)}) \leq \tvd(\vec{x}, \vec{x}^{(0)}) + \tvd(\vec{x}^{(0)}, \vec{x}^{(i)})\enspace.
        \end{equation*}
        The first term we have already computed in \Cref{lemma:r0_uniformish}.
        To bound the second term, we will use a property of total variation distance: it cannot increase if one applies the same transformation to both arguments.
        Let $f : [m]\times [0, B] \to \mathcal{F}$ be a mapping to a family of distributions where $f(i, \ell)$ denotes the distribution described by a process where (1) a string $\vec{y}\sim\mathcal{D}^{(0)}$ is sampled and then (2) $\vec{y}_{B_i}$ is overwritten by a uniformly sampled string in $\{0, 1\}^B$, conditioned on it having $\ell$ 1s.
        For $i\in[m]$, we claim that ($f(i, X^{(i)}_i)$, $\mathcal{D}^{(i)})$ and $(f(i, X^{(0)}_i), \mathcal{D}^{(0)})$ are pairs of identical distributions.
        To see this for the first pair of distributions, consider $\vec{y}\sim f(i, X^{(i)}_i)$.
        By definition $\vec{y}_{B_j}$ is identically distributed to $\vec{x}^{(i)}_{B_j}$ for $j \neq i$, and so the only special case to consider is $j = i$.
        For this case, we note that $X^{(i)}_i$ is defined as the number of 1s in $\vec{x}^{(i)}_{B_i}$, and that symmetry implies that any substring with $\ell$ 1s is equally likely, thus $\vec{y}_{B_i}$ must be identically distributed to $\vec{x}^{(i)}_{B_i}$ as well, and we are done.
        An analogous argument can be used for showing that $f(i, X^{(0)}_i)$ and $\mathcal{D}^{(0)}$ are identically distributed.

        We thus have $\tvd(\vec{x}^{(0)}, \vec{x}^{(i)}) = \tvd(f(i, X^{(0)}_i), f(i, X^{(i)}_i)) \leq \tvd(X^{(0)}_i, X^{(i)}_i)$, and so our problem reduces to bounding this quantity.
        \begin{align*}
            \tvd(X^{(0)}_i, X^{(i)}_i) &= \frac{1}{2}\sum_{\ell=0}^{B}\lvert\Pr[X^{(0)}_i=\ell]-\Pr[X^{(i)}_i=\ell]\rvert\nonumber\\
            &= \frac{1}{2}\sum_{\ell=0}^{B}\lvert\Pr[X^{(0)}_i=\ell]-\Pr[X^{(0)}_i=\ell-k]\rvert\nonumber\\
            &= \frac{1}{2}\bigg(\sum_{\ell=B/4}^{B/4 + k -1}\Pr[X^{(0)}_i=\ell] + \sum_{l=3B/4+1-k}^{3B/4}\Pr[X^{(0)}_i=\ell]\nonumber\\
            &+ \sum_{\ell=B/4 + k}^{3B/4}\lvert\Pr[X^{(0)}_i=\ell]-\Pr[X^{(i)}_i=\ell]\rvert\bigg)\nonumber\\
            &\leq k\cdot\Pr[X^{(0)}_i=B/4+k] + \frac{1}{2}\sum_{\ell=B/4+k}^{3B/4}\lvert\Pr[X^{(0)}_i=\ell]-\Pr[X^{(0)}_i=\ell-k]\rvert
        \end{align*}
        The last step follows from the symmetry of $X^{(0)}_i$ : $\Pr[X^{(0)}_i = B/2 + \ell ] = \Pr[X^{(0)}_i = B/2 - \ell]$, and that the probability is greater for values closer to $B/2$.
        Observe that the first $B/4-k/2$ terms in the second sum of the last step are equal to the last $B/4-k/2$ terms by the symmetry of the distribution around $B/2$ (for $\ell=B/2+k/2$, the term in the sum is equal to zero). %
        Continuing the derivation and leveraging this symmetry, we get
        \begin{align*}
            \tvd(X^{(0)}_i, X^{(i)}_i) &\leq k\cdot\Pr[X^{(0)}_i=B/4+k] + \sum_{\ell=B/4+k}^{B/2+k/2-1}(\Pr[X^{(0)}_i=\ell]-\Pr[X^{(0)}_i =\ell-k])\\
            &= k\cdot\Pr[X^{(0)}_i=B/4+k]\\
            &+ \sum_{\ell=0}^{k-1}(\Pr[X^{(0)}_i=B/2-k/2+\ell]-\Pr[X^{(0)}_i=B/4+\ell])\\
            &\leq 2k\cdot \Pr[X^{(0)}_i=B/2]%
        \end{align*}
        where the first equality follows from the sum telescoping and the second inequality follows from bounding each term in the sum.
        Continuing, letting $X\sim\mathrm{Bin}(B, 1/2)$, we have that
        \begin{align*}
            2k\cdot\Pr[X^{(0)}_i = B/2] &= 2k\cdot\Pr\bigg[X = B/2 \,\bigg|\, \frac{B}{4} \leq X \leq \frac{3B}{4}\bigg]
            = \frac{2k\cdot\Pr[X = B/2]}{1-2\Pr[X < B/4]}\\
            &\leq \frac{2k\cdot\Pr[X = B/2]}{1-2\exp(-B/8)} \leq 6k\cdot\Pr[X=B/2] = 6k\cdot 2^{-B}\binom{B}{B/2}\\
            &\leq 6k\cdot2^{-B}\cdot\frac{\sqrt{2}\cdot 2^{B}}{\sqrt{\pi B}} = \frac{6\sqrt{2}}{\sqrt{\pi}}\cdot\frac{k}{\sqrt{B}} = \frac{6\sqrt{2}}{\sqrt{\pi}} \cdot T^{-1/20}\enspace ,
        \end{align*}
        where all inequalities are valid for $T$ greater than some absolute constant, and the last inequality is a variant of Stirling's inequality.
        We finally get
        \begin{equation*}
        \tvd(\vec{x}, \vec{x}^{(i)}) \leq \tvd(\vec{x}, \vec{x}^{(0)}) + \tvd(\vec{x}^{(0)}, \vec{x}^{(i)}) \leq 2\sqrt{T}\exp(-\sqrt{T}/8) + \frac{6\sqrt{2}}{\sqrt{\pi}} \cdot T^{-1/20} \leq 10\cdot T^{-1/20}\enspace ,
        \end{equation*}
        where the last inequality is true for $T \geq 400$, thus proving the lemma. %
    \end{proof}
    Using \Cref{lemma:r0_uniformish} and \Cref{lemma:ri_uniformish}, we arrive at
    \begin{align*}
        \Pr[\Alg(\vec{x}^{(i)}, \vec{x}^{(0)})\in E_i] &\geq \Pr[\mathrm{err}(\mathcal{M}(\vec{x}^{(i)})) \leq \alpha]\cdot\Pr[\mathrm{err}(\mathcal{M}(\vec{x}^{(0)}))\leq \alpha]\\
        &\geq \big(\Pr[\mathrm{err}(\mathcal{M}(\vec{x}))\leq \alpha]-\tvd(\vec{x}^{(i)}, \vec{x})\big)\\
        &\times\big(\Pr[\mathrm{err}(\mathcal{M}(\vec{x}))\leq \alpha]-\tvd(\vec{x}^{(0)}, \vec{x})\big)\\
        &\geq (3/4 - \Delta')(3/4 -\Delta) \geq 1/2\enspace .
    \end{align*}
    where the last inequality holds for $T$ greater than an absolute constant.
    Next observe that $\vec{x}^{(i)}$ and $\vec{x}^{(0)}$ are $k$-neighboring with probability 1, and so we have $\Pr[\Alg(\vec{x}^{(i)}, \vec{x}^{(0)})\in E_i] \leq e^{k\varepsilon}\Pr[\Alg(\vec{x}^{(0)}, \vec{x}^{(0)})\in E_i]$, and therefore $\Pr[\Alg(\vec{x}^{(0)}, \vec{x}^{(0)})\in E_i]\geq e^{-k\varepsilon}/2$. 
    Since all $E_j$ are disjoint, we get 
    \begin{align*}
        1\geq \sum_{j=1}^{m}\Pr[\Alg(\vec{x}^{(0)}, \vec{x}^{(0)})\in E_j]\geq m\cdot e^{-k\varepsilon}/2 \enspace,
    \end{align*}
    and therefore
    \begin{align*}
        k\geq \varepsilon^{-1}\log(m/2)\enspace .
    \end{align*}
    Recalling that $m=T/B = \sqrt{T}$, we thus have $k=\Omega(\varepsilon^{-1}\log T)$ and therefore $\alpha= k/4 =\Omega(\varepsilon^{-1}\log T)$.
\end{proof}

\section{Discussion}
While there are no known fundamental reasons ruling out asymptotic improvements in the mean squared error for \puredp{} -- it might be possible to match the $\Omega(\log(T)^2)$ lower bound of~\cite{henzinger_almost_2023} -- there have been no asymptotic improvements since the binary tree mechanism \cite{dwork_differential_2010, chan_differentially_2012}.
This state of affairs motivates exploring how much leading constants can be improved using the factorization mechanism framework.
\cref{alg:main_alg_simple} falls into the framework, is easy to implement, efficient, and offers the best known error-scaling to date.

Our paper also reveals that using Laplace noise improves the best possible $(\varepsilon,\delta)$-differentially private factorization mechanisms based on Gaussian noise whenever $\delta$ is sufficiently small.
An interesting future direction is to investigate, for fixed $\epsilon, \delta$ and a given factorization: which noise distribution achieves optimal error?
It seems unlikely that Laplace or Gaussian noise is always best -- indeed, Vinterbo~\cite{Vinterbo22} has results of this flavor in more restricted settings.

Finally, while we show a lower bound for random inputs that matches the best lower bound known for worst-case inputs, it is still possible that random inputs are easier and allow lower error than what can be achieved for worst-case inputs.

\medskip

{\bf Acknowledgement.} We are grateful to Monika Henzinger and Jalaj Upadhyay for discussions on continual counting that inspired part of this work.

\bibliographystyle{plainurl}%
\bibliography{main}

\newpage
\appendix

\section{Definitions for Differential Privacy}\label{appendix:dp}

Here are the standard definitions for differential privacy which we have use for in the paper.
\begin{definition}[$(\epsilon, \delta)$-Differential Privacy~\cite{dwork_algorithmic_2013}]
    A randomized algorithm $\M : \mathcal{X}^n\to\mathcal{Y}$ is ($\epsilon, \delta)$-differentially private if for all $S\subseteq\mathsf{Range}(\M)$ and all neighboring inputs $\mathcal{D}, \mathcal{D}'\in \mathcal{X}^n$, the neighboring relation written as $\mathcal{D}\sim \mathcal{D}'$, we have that:
    \begin{equation*}
        \Pr[\M(\mathcal{D})\in S] \leq \exp(\epsilon)\Pr[\M(\mathcal{D}')\in S] + \delta\, ,
    \end{equation*}
    where $(\epsilon, 0)$-DP is referred to as \puredp{}.
\end{definition}
\begin{lemma}[Laplacian Mechanism \cite{dwork_algorithmic_2013}]
    Let $f : \mathcal{X}^n \to \R^d$ be a function with $\ell_1$-sensitivity $\Delta_1 \coloneqq \max_{\D\sim\D'}\norm{f(\D)-f(\D')}_1$.
    For a given data set $\D\in\mathcal{X}^n$ and $\epsilon > 0$, the mechanism that releases $f(\D) + \lap(\Delta_1/\epsilon)^d$ satisfies \puredp{}.
\end{lemma}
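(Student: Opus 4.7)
The plan is to prove the Laplace mechanism is $\epsilon$-DP by a direct pointwise density-ratio calculation, which is the standard approach. First I would write down the product density of $f(\D) + \lap(\Delta_1/\epsilon)^d$ at an arbitrary output point $\vec{y}\in\R^d$: since the $d$ noise coordinates are independent, with scale $b = \Delta_1/\epsilon$, the joint density is
\begin{equation*}
p_{\D}(\vec{y}) \;=\; \prod_{i=1}^{d}\frac{1}{2b}\exp\!\left(-\frac{|\vec{y}_i - f(\D)_i|}{b}\right) .
\end{equation*}

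Next I would fix an arbitrary neighboring pair $\D \sim \D'$ and form the ratio $p_{\D}(\vec{y})/p_{\D'}(\vec{y})$. The $1/(2b)$ normalizers cancel, giving
\begin{equation*}
\frac{p_{\D}(\vec{y})}{p_{\D'}(\vec{y})} \;=\; \exp\!\left(\frac{1}{b}\sum_{i=1}^{d}\bigl(|\vec{y}_i - f(\D')_i| - |\vec{y}_i - f(\D)_i|\bigr)\right) .
\end{equation*}
Then I would apply the reverse triangle inequality coordinatewise, $|\vec{y}_i - f(\D')_i| - |\vec{y}_i - f(\D)_i| \le |f(\D)_i - f(\D')_i|$, and sum to obtain $\sum_i |f(\D)_i - f(\D')_i| = \norm{f(\D) - f(\D')}_1 \le \Delta_1$. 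Substituting $b = \Delta_1/\epsilon$ gives the pointwise bound $p_{\D}(\vec{y})/p_{\D'}(\vec{y}) \le \exp(\epsilon)$.

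Finally, to conclude $\epsilon$-DP in the event form required by the definition, I would integrate: for any measurable $S \subseteq \R^d$,
\begin{equation*}
\Pr[\M(\D)\in S] \;=\; \int_S p_{\D}(\vec{y})\,d\vec{y} \;\le\; e^{\epsilon}\!\int_S p_{\D'}(\vec{y})\,d\vec{y} \;=\; e^{\epsilon}\Pr[\M(\D')\in S],
\end{equation*}
which is exactly the $(\epsilon,0)$-DP guarantee. There is no real obstacle here; the only two places that require care are (i) recognizing that independence across coordinates is what lets the multivariate density factor cleanly into the $\ell_1$-norm in the exponent, and (ii) invoking the reverse triangle inequality in the correct direction so that the bound is expressed in terms of $\norm{f(\D)-f(\D')}_1$ rather than $\norm{\vec{y}-f(\D)}_1$ or similar. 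Both are routine, so this proof is essentially a one-line computation once the densities are written down.
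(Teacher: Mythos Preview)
Your proof is correct and is precisely the standard density-ratio argument from~\cite{dwork_algorithmic_2013}. Note, however, that the paper does not actually prove this lemma: it is stated in the appendix as a background result with a citation and no proof, so there is nothing to compare against beyond observing that what you wrote is the textbook argument the citation points to.
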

\begin{lemma}[Gaussian Mechanism \cite{dwork_algorithmic_2013}]\label{lemma:gaussian_noise_scaling}
    Let $f : \mathcal{X}^n \to \R^d$ be a function with $\ell_2$-sensitivity $\Delta_2 \coloneqq \max_{\D\sim\D'}\norm{f(\D)-f(\D')}_2$.
    For a given data set $\D\in\mathcal{X}^n$ and $\varepsilon, \delta \in (0, 1)$, the mechanism that releases $f(\D) + \mathcal{N}(0, 2\ln(1.25/\delta)\Delta_2^2/\epsilon^2)^d$ satisfies \approxdp.
\end{lemma}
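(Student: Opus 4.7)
The plan is to reduce the $d$-dimensional problem to one dimension via the rotational symmetry of the spherical Gaussian, then compute the privacy loss explicitly as a univariate Gaussian random variable, and close with a standard Gaussian tail bound. Since $\mathcal{N}(0, \sigma^2 I_d)$ is invariant under orthogonal rotations, the ratio $p_\D(z)/p_{\D'}(z)$ depends only on the projection of $z$ onto $\nu := f(\D') - f(\D)$; all coordinates orthogonal to $\nu$ produce identical Gaussian factors that cancel in the ratio. After rotating so that $\nu$ aligns with $e_1$, the problem becomes a scalar one with shift $\|\nu\|_2 \leq \Delta_2$, and it suffices to treat the worst case $\|\nu\|_2 = \Delta_2$.

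In this 1-D reduction, fix $\sigma^2 = 2\ln(1.25/\delta)\Delta_2^2/\epsilon^2$ and take $f(\D) = 0$, $f(\D') = \Delta_2$. For $Z \sim \mathcal{N}(0, \sigma^2)$ the privacy loss evaluates to
\begin{equation*}
    L(Z) \;=\; \ln\frac{\exp(-Z^2/(2\sigma^2))}{\exp(-(Z-\Delta_2)^2/(2\sigma^2))} \;=\; -\frac{\Delta_2 Z}{\sigma^2} + \frac{\Delta_2^2}{2\sigma^2},
\end{equation*}
so $L(Z)$ is itself Gaussian with mean $\Delta_2^2/(2\sigma^2)$ and variance $\Delta_2^2/\sigma^2$. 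By the standard equivalence between \approxdp{} and a one-sided tail condition on the privacy loss, it remains to show $\Pr[L(Z) > \epsilon] \leq \delta$; the other direction is handled by swapping $\D$ and $\D'$ and using Gaussian symmetry.

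The inequality $L(Z) > \epsilon$ rewrites as $W > u$ where $W := -Z/\sigma \sim \mathcal{N}(0,1)$ and $u = \sigma\epsilon/\Delta_2 - \Delta_2/(2\sigma)$. Substituting the chosen $\sigma$ yields $u^2 = 2\ln(1.25/\delta) - \epsilon + \epsilon^2/(8\ln(1.25/\delta))$, so the Mills-ratio bound $\Pr[W > u] \leq e^{-u^2/2}/(u\sqrt{2\pi})$ produces a leading factor of $\delta/1.25$. The final step, which I expect to be the main obstacle, is to verify that the remaining constants --- the residual $e^{\epsilon/2}$ from expanding $u^2$ together with the prefactor $1/(u\sqrt{2\pi})$ --- multiply to at most $1.25$ uniformly across the admissible range $\epsilon, \delta \in (0,1)$. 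This is an elementary numeric check, but the specific constant $1.25$ is calibrated precisely so that the inequality goes through for the full parameter range; for $\delta$ very close to $1$ one may invoke the trivial bound since the \approxdp{} statement is then nearly vacuous.
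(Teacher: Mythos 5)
The paper states this lemma without proof, as a standard result cited from Dwork and Roth, and your argument is precisely the standard proof from that reference: reduce to one dimension via rotational invariance of the spherical Gaussian, observe that the privacy loss is itself Gaussian, and close with a Mills-ratio tail bound in which the constant $1.25$ is calibrated to absorb the residual $e^{\epsilon/2}$ and $1/(u\sqrt{2\pi})$ factors. One caveat: your fallback for $\delta$ close to $1$ (``the \approxdp{} statement is then nearly vacuous'') is not a valid step, since $(\epsilon,\delta)$-DP is non-trivial for every $\delta<1$, and the large-$\delta$ regime is exactly where the textbook calculation is known to be delicate (this is the gap addressed by the tighter analysis of Balle and Wang, which the paper's own footnote alludes to); the concluding numeric verification therefore needs a genuine argument in that regime rather than a dismissal.
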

\section{\texorpdfstring{$k$}{k}-ary Tree Mechanisms: Proofs}
In this subsection, we will briefly discuss the proof sketch for Theorem \ref{thm:no_subtraction}. It is important to note that while this proof is included, it is not considered a main contribution of this paper.
\begin{proof}[Proof sketch of Theorem~\ref{thm:no_subtraction}] \label{appendix:proofT1}
    The utility and privacy follow from Lemmas \ref{lemma:nosub_private}~and~\ref{lemma:nosub_mse}, and the online nature of the mechanism follows from the partial sums being added on Line~11 involving inputs $\vec{x}_i$ with $i\leq p\leq t$.
    For the argument about space usage, a standard argument \cite{chan_private_2011, andersson2024smooth} can be extended from the binary tree mechanism to $k$-ary tree mechanism
    to yield $O(k\log_k T)$.
    Similarly, the time needed to release $T$ prefix sum will be $O(kT)$ (assuming noise can be generated in constant time) based on extending the analysis of incrementing a binary counter to incrementing a $k$-ary counter.
\end{proof}
\section{Leveraging subtraction for even \texorpdfstring{$k$}{k}}\label{appendix:even_k}

In the main part of the paper we only considered subtraction for $k$-ary trees when $k$ was odd.
This part of the appendix is intended for motivating that choice.

First note that for even $k$ we lose the symmetry in the offset digits, so we need to make a decision if the digits in the offset representation are $\{-k/2 + 1, \dots k/2\}$ or $\{-k/2, \dots, k/2 - 1\}$.
We pick the first one as that allows for supporting more inputs, giving \cref{def:even_basis_offset}.
\begin{definition}[Offset (even) $k$-ary representation of integers]\label{def:even_basis_offset}
    Let $k$, $w$ and $t$ be positive integers, with $k$ also being even, satisfying $t \leq \frac{k(k^h - 1)}{2(k-1)}$.
    We then define $\vec{t} = \eenc_k(t, w)\in[-\frac{k}{2}+1, \frac{k}{2}]^w$ as the unique vector with integer entries in $[-\frac{k}{2}+1, \frac{k}{2}]^w$ satisfying $\sum_{i=1}^{w}k^{i-1}\vec{t}_i = t$.
\end{definition}
Substituting the representation used in \cref{alg:main_alg_simple} by the one in \cref{def:even_basis_offset} and setting the height appropriately gives a valid algorithm using a $k$-ary tree.
However, it performs worse than using odd $k$.
\begin{claim}\label{claim:even_mse}
    For even $k\geq 4$, $h\geq 1$, and $T = \frac{k(k^h - 1)}{2(k-1)}$, \cref{alg:main_alg_simple}, with $h$ correctly set, achieves a mean squared error of $\frac{kh^3}{2\epsilon^2}\frac{1}{1-1/k^h} + o(h^2)$ when producing $T$ outputs.
\end{claim}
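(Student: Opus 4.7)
The plan is to parallel the proof of \cref{lemma:main_mse}, adapted for the asymmetric offset representation in \cref{def:even_basis_offset}. The privacy analysis is unchanged: the $\ell_1$-sensitivity of the released subtree sums (excluding the root) is still $h$, and each output at time $t$ is a signed sum of independent $\lap(h/\epsilon)$ noise terms, one per level per vertex used, so its variance is $(2h^2/\epsilon^2)\cdot\|\eenc_k(t,h)\|_1$. Thus the central quantity to bound is $a_{[1,T]} = \frac{1}{T}\sum_{t=1}^T \|\eenc_k(t,h)\|_1$, and the MSE is $a_{[1,T]} \cdot 2h^2/\epsilon^2$.

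First I would compute the total weight over all $k^h$ representable integers. Since each digit $D_i$ is independent and uniform over $\{-k/2+1,\ldots,k/2\}$ with $\expectation[|D_i|] = \frac{1}{k}\sum_{d=-k/2+1}^{k/2}|d| = k/4$, linearity of expectation gives $\sum_{t=-T'}^{T}\|\eenc_k(t,h)\|_1 = hk^{h+1}/4$, where $T' = (k/2-1)(k^h-1)/(k-1)$ is the magnitude of the smallest representable negative integer.

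The main obstacle is isolating the positive-index part $S_+(h) = \sum_{t=1}^T \|\eenc_k(t,h)\|_1$: unlike the odd case, the map $\vec{t}\mapsto-\vec{t}$ no longer preserves the digit set, so the clean symmetry argument used in \cref{lemma:main_mse} fails. I would instead condition on the top digit $d_h$. If $d_h\in\{1,\ldots,k/2\}$ the top digit alone guarantees $t>0$ regardless of the lower $h-1$ digits, and summing the contribution $|d_h|k^{h-1} + (h-1)k^h/4$ over these choices yields $\frac{k(k+2)k^{h-1}}{8} + \frac{(h-1)k^{h+1}}{8}$. If $d_h\in\{-k/2+1,\ldots,-1\}$ then $t<0$ always, contributing nothing. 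If $d_h=0$ the contribution reduces to $S_+(h-1)$, since $|d_h|=0$ and the remaining digits form an $(h-1)$-digit representation. This produces the recursion
\[
    S_+(h) = S_+(h-1) + \frac{k(k+2)k^{h-1}}{8} + \frac{(h-1)k^{h+1}}{8}, \qquad S_+(0)=0.
\]

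Finally I would solve this recursion in closed form using the standard geometric and arithmetic-geometric identities for $\sum_{\ell=1}^h \ell k^\ell$, divide by $T = k(k^h-1)/(2(k-1))$, and expand. The dominant contribution is $a_{[1,T]} = \frac{hk/4}{1-1/k^h} + O(1)$, and multiplying by $2h^2/\epsilon^2$ yields the stated MSE of $\frac{kh^3}{2\epsilon^2(1-1/k^h)}$ plus lower-order terms. The algebraic bookkeeping in solving the recursion is the principal hurdle, but no new conceptual ideas are required beyond the sign-via-top-digit case analysis; the resulting leading constant $k/2$ is strictly larger than the odd-$k$ constant $(k-1/k)/2$, which is precisely the reason the main algorithm restricts to odd $k$.
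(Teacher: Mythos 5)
Your proposal is correct and follows essentially the same route as the paper's proof sketch: both condition on the leading digit (positive leading digit gives an explicit contribution, negative gives $t<0$ and contributes nothing, zero recurses), and your recursion $S_+(h) = S_+(h-1) + \frac{k(k+2)k^{h-1}}{8} + \frac{(h-1)k^{h+1}}{8}$ is exactly the paper's $c_h = \bigl(\frac{k+2}{4} + \frac{k(h-1)}{4}\bigr)\frac{k}{2}k^{h-1} + c_{h-1}$ after expanding. The remaining solve-and-divide-by-$T$ step matches as well.
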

\begin{proof}[Proof sketch.]
    The same proof as for \cref{lemma:main_mse} does not work due to the asymmetry in the digits.
    Instead, we can express a recursion for the number of terms needed.
    Let $c_h$ be the total number of vertices combined to produce all prefixes that a tree of height $h$ can support.
    Then we have
    \begin{equation*}
        c_h = \bigg(\frac{k+2}{4} + \frac{k(h-1)}{4}\bigg)\frac{k}{2}k^{h-1} + c_{h-1}\,,
    \end{equation*}
    where $c_0 = 0$.
    The idea behind the recursion is as follows: the first term corresponds to all the integer representations with positive leading digit, of which there are $k/2 \cdot k^h$, and the leading digit contributes $(k+2)/4$, the remaining $k(h-1)/4$ on average.
    The second term is the case where the first digit is zero, in which case the contribution is $c_{h-1}$.
    Solving the recursion and dividing by $T$ yields the average number of vertices added, $c_h / T = \frac{kh}{4}\frac{1}{1-1/k^h} + o(1)$, which when multiplied by the variance of a vertex, $2h^2/\epsilon^2$, gives the statement.
\end{proof}
Ignoring the lower order terms, this is worse by a factor $1- 1/k^2$ over the odd case, and we can also find the optimal $k$ here.
\begin{corollary}\label{cor:best_k_scaling_even}
    \cref{alg:main_alg_simple}, with modifications for even $k$, achieves a mean squared error that scales as $\frac{k}{2\epsilon^2\log(k)^3}\log(T)^3 + o(\log(T)^3)$, and which is minimized for $k=20$ yielding $(0.1238/\epsilon^2)\log(T)^3 + o(\log(T)^3)$.
\end{corollary}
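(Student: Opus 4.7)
The plan is to derive the leading-order scaling directly from Claim~\ref{claim:even_mse} by inverting the relation $T = \frac{k(k^h-1)}{2(k-1)}$ to express $h$ as a function of $T$, and then optimize the resulting constant as a function of even $k \geq 4$ using one-variable calculus.

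First I would start from Claim~\ref{claim:even_mse}, which gives $\mse = \frac{k h^3}{2\epsilon^2(1-1/k^h)} + o(h^2)$ for $T_h = \frac{k(k^h-1)}{2(k-1)}$. Solving for $k^h$ gives $k^h = \frac{2T_h(k-1)}{k} + 1$, so for fixed $k$ we have $h = \log_k T_h + O_k(1) = \log(T_h)/\log(k) + O(1)$. Cubing yields $h^3 = \log(T_h)^3/\log(k)^3 + O(\log(T_h)^2)$, and the factor $\tfrac{1}{1-1/k^h} = 1 + O(1/T_h)$ affects only lower-order terms. Combining, $\mse = \frac{k}{2\epsilon^2 \log(k)^3}\log(T_h)^3 + o(\log(T_h)^3)$. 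For arbitrary $T$, the algorithm is run with $h = \ceil{\log_k(\frac{2T(k-1)}{k}+1)}$; this still satisfies $h = \log(T)/\log(k) + O(1)$, so the same leading constant persists (this is the same device used in deducing Corollary~\ref{cor:best_k_scaling} from Lemma~\ref{lemma:main_mse}).

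Next I would minimize the leading constant $g(k) \coloneqq \frac{k}{\log(k)^3} = \frac{k (\ln 2)^3}{(\ln k)^3}$ as a continuous function of real $k > 1$. Differentiating, $\frac{d}{dk}\!\left[\frac{k}{(\ln k)^3}\right] = \frac{\ln k - 3}{(\ln k)^4}$, which vanishes uniquely at $\ln k = 3$, i.e.\ $k^\star = e^3 \approx 20.086$. Since $g(k)\to\infty$ at both $k\to 1^+$ and $k\to\infty$, this is the global minimum. Restricting to even integers $k\geq 4$ and evaluating near $k^\star$ gives $g(18) \approx 0.2483$, $g(20) \approx 0.2478$, $g(22) \approx 0.2481$, while $g$ is monotone away from $k^\star$ on either side of these three. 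Hence $k=20$ is optimal among even $k$, with $\frac{20}{2\log_2(20)^3} \approx 0.1238$, matching the stated constant.

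The argument presents essentially no obstacles beyond routine calculation; the only point requiring mild care is confirming that the additive $o(h^2)$ slack in Claim~\ref{claim:even_mse} and the multiplicative correction $(1-1/k^h)^{-1} = 1 + O(k^{-h})$ are both absorbed into $o(\log(T)^3)$, which is immediate because $h^2 = O(\log(T)^2)$ and $k^{-h} = O(1/T)$ for fixed $k$. The fact that we treat $k$ as a constant (so $O_k(1)$ terms are absorbed into the error) is consistent with the corollary's hypothesis that $k$ is a fixed integer.
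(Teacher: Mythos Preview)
Your proposal is correct and follows exactly the approach the paper implicitly intends: the paper states Corollary~\ref{cor:best_k_scaling_even} without proof as an immediate consequence of Claim~\ref{claim:even_mse}, in direct analogy with how Corollary~\ref{cor:best_k_scaling} follows from Lemma~\ref{lemma:main_mse}, and your derivation (expressing $h$ in terms of $T$, then minimizing $k/\log(k)^3$ via $k^\star=e^3\approx 20.086$ and checking the nearest even integers) is precisely that argument spelled out.
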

As the analysis is less elegant, we get more lower order terms and the scaling is worse, we focus on odd $k$ in the paper.

\end{document}